\documentclass[11pt,letterpaper]{article}
\usepackage[margin=1in]{geometry}

\usepackage{times}
\usepackage{soul}
\usepackage{url}
\usepackage[hidelinks]{hyperref}
\usepackage[utf8]{inputenc}
\usepackage[small]{caption}
\usepackage{graphicx}
\usepackage{amsmath}
\usepackage{amsthm}
\usepackage{booktabs}
\usepackage[linesnumbered,ruled,vlined]{algorithm2e}
\usepackage[switch]{lineno}
\usepackage{color}
\usepackage{xspace}
\usepackage{natbib}
\setcitestyle{authoryear,open={[},close={]}}
\usepackage{amsfonts,bbm,bm,mathtools,comment}
\usepackage{cleveref} 
\usepackage[normalem]{ulem} 
\usepackage{authblk}


\urlstyle{same}

\newtheorem{theorem}{Theorem}[section]
\newtheorem{lemma}[theorem]{Lemma}

\newtheorem{proposition}[theorem]{Proposition}

\theoremstyle{definition}
\newtheorem{definition}[theorem]{Definition}
\newtheorem{remark}[theorem]{Remark}

\makeatletter 
\newcommand{\EF}[1]{\if\relax\detokenize\expandafter{\@firstofone#1{}}\relax EF\xspace\else EF#1\fi}
\makeatother
\newcommand{\EFOne}{\EF{1}\xspace}
\newcommand{\EFX}{\EF{X}\xspace}
\newcommand{\EFM}{\EF{M}\xspace}
\newcommand{\EFXM}{\EF{XM}\xspace}

\newcommand{\x}{\mathbf{x}}
\newcommand{\mbar}{\overline{m}} 
\newcommand{\kbar}{\overline{k}} 

\newcommand{\indivisibleGoods}{M}
\newcommand{\alloc}{\mathcal{A}}
\newcommand{\bundle}{A}
\newcommand{\val}{\mathbf{u}}
\newcommand{\utility}[2]{u_{#1}(#2)}

\newcommand{\OPT}{\mathsf{OPT}}
\newcommand{\SW}{\mathsf{SW}}
\newcommand{\SP}{\mathsf{SP}}

\title{A Complete Landscape for the Price of Envy-Freeness}
\author[1]{Zihao Li}
\author[2]{Shengxin Liu}
\author[3]{Xinhang Lu}
\author[4]{Biaoshuai Tao}
\author[5]{Yichen Tao}
\affil[1]{Nanyang Technological University, zihao004@e.ntu.edu.sg}
\affil[2]{Harbin Institute of Technology, Shenzhen, sxliu@hit.edu.cn}
\affil[3]{UNSW Sydney, xinhang.lu@unsw.edu.au}
\affil[4]{Shanghai Jiao Tong University, bstao@sjtu.edu.cn}
\affil[5]{University of Michigan, ychtao@umich.edu}

\date{}

\begin{document}
\maketitle

\begin{abstract}
We study the efficiency of fair allocations using the well-studied \emph{price of fairness} concept, which quantitatively measures the worst-case efficiency loss when imposing fairness constraints.
Previous works provided partial results on the price of fairness with well-known fairness notions such as envy-freeness up to one good (EF1) and envy-freeness up to any good (EFX).
In this paper, we give a complete characterization for the price of envy-freeness in various settings.
In particular, we first consider the two-agent case under the indivisible-goods setting and present tight ratios for the price of EF1 (for scaled utility) and EFX (for unscaled utility), which resolve questions left open in the literature.
Next, we consider the mixed goods setting which concerns a mixture of both divisible and indivisible goods.
We focus on envy-freeness for mixed goods (EFM), which generalizes both envy-freeness and EF1, as well as its strengthening called envy-freeness up to any good for mixed goods (EFXM), which generalizes envy-freeness and EFX.
To this end, we settle the price of EFM and EFXM by providing a complete picture of tight bounds for two agents and asymptotically tight bounds for $n$ agents, for both scaled and unscaled utilities.
\end{abstract}

\section{Introduction}

Fair division is a fundamental topic in algorithmic game theory and has attracted wide attention.
In this problem, we need to allocate some resources among agents in a fair manner.
The most classic notion of fairness is \emph{envy-freeness (EF)}, which requires that each agent does not envy another agent's bundle.
In other words, every agent values her own bundle weakly better than the bundle received by any other agent.
When the goods are divisible, meaning that they can be divided into arbitrarily small pieces and allocated to different agents (the cake-cutting problem), an envy-free allocation always exists~\citep{alon1987splitting}.
However, when considering indivisible goods, meaning that each of them should be allocated to an agent in its entirety, an envy-free allocation may not exist.
Thus, a relaxation of envy-freeness, called \emph{envy-freeness up to one good (EF1)}, has been proposed to circumvent this issue.
An allocation is EF1 if for any pairs of agents~$i$ and~$j$, agent~$i$ does not envy agent~$j$ after removing one item from~$j$'s bundle.
Such an allocation always exists when allocating indivisible goods~\citep{LiptonMaMo04}.
Besides EF1, another relaxation of envy-freeness called \emph{envy-freeness up to any good (EFX)} has also received wide attention in recent years~\citep[e.g.,][]{ChaudhuryGaMe20,PlautRo20,AmanatidisBiFi21,AkramiAlCh23,GargMu23,christodoulou2023fair,chaudhury2021improving}, where an agent does not envy another agent after removing \emph{any} item from the latter agent's bundle.
Moreover, there have been a number of papers on \emph{partial} \EFX allocations with good properties~\citep[e.g.,][]{CaragiannisGrHu19,ChaudhuryKaMe21,BergerCoFe22}.
Beyond the setting concerning either divisible or indivisible goods, some recent studies have focused on fairly dividing a mixture of both divisible and indivisible resources~\citep{BeiLiLi21,BeiLiLu21,BhaskarSrVa21,KawaseNiSu23,LiLiLu23,NishimuraSu23,BeiLiLu23}.
Among these, \citet{BeiLiLi21} first considered the mixed-goods setting and proposed the fairness notion called \emph{envy-freeness for mixed goods (EFM)}, which naturally combines envy-freeness and EF1 together.
Specifically, under an EFM allocation, for each agent, if she receives only indivisible goods, no other agents envy her using the EF1 criterion; otherwise, no other agents envy her using the EF criterion.
An EFM allocation is guaranteed to exist~\citep{BeiLiLi21}.
By strengthening EF1 to EFX when assessing the envy from others to an agent who receives only indivisible goods, we have a stronger notion than EFM, which is called \emph{envy-freeness up to any good for mixed goods (EFXM)}~\citep{BeiLiLi21,NishimuraSu23}.

In addition to fairness, \emph{efficiency}, or \emph{social welfare}, which refers to the total utility of all the agents towards their bundles, also plays an important role in evaluating an allocation~\citep{BuLiLi22,BuLiLiu23}.
The \emph{price of fairness}, introduced independently by \citet{BertsimasFaTr11} and \citet{CaragiannisKaKa12}, is a quantitative measure indicating the loss of social welfare when a given fairness constraint is imposed.
More specifically, the price of fairness is defined as the supremum ratio of the maximum social welfare among all allocations to the maximum social welfare among all fair allocations under a particular fairness property, where the supremum is taken over all possible instances.
For divisible goods, the price of envy-freeness is~$\Theta(\sqrt{n})$, where $n$ is the number of agents~\citep{BertsimasFaTr11}.
For indivisible goods, the price of EF1 is~$\Theta(\sqrt{n})$~\citep{BarmanBhSh20,BeiLuMa21}.
For the special case where $n = 2$, \citet{BeiLuMa21} provided a lower bound of~$8/7$ and an upper bound of~$\frac{2}{\sqrt{3}}$ for the price of EF1 for scaled utilities, and \citet{BuLiLi22} gave a tight ratio of~$2$ for unscaled utilities.
Regarding the price of EFX for indivisible goods, with $n = 2$ agents, \citet{BeiLuMa21} provided a tight bound of~$3/2$ under scaled utilities, and \citet{BuLiLi22} presented a lower bound of~$2$ under unscaled utilities.
\citeauthor{BuLiLi22} also showed tight bounds for the price of EFX for~$n$ agents for both scaled and unscaled utilities, which are~$\Theta(\sqrt{n})$ and~$\Theta(n)$, respectively.

In this paper, we close gaps on the price of \EFOne and \EFX for indivisible-goods allocation when there are two agents~\citep{BeiLuMa21,BuLiLi22}.
Moreover, for the first time, we provide tight or asymptotically tight bounds on the price of \EFM and \EFXM when allocating mixed divisible and indivisible goods, resolving questions left open in the literature~\citep{LiuLuSu23}.
Our results are shown in \Cref{table:our-results-2,table:our-results-n}.
Below, we highlight some interesting features/observations according to our results:
\begin{itemize}
\item In all of the settings, tight bounds (or asymptotically tight bounds for an arbitrary number of agents) on the price of envy-freeness are now known.

\item For two agents, we show that the price of \EFOne is exactly~$8/7$, which closes the gap between~$8/7$ and~$2/\sqrt{3}$ left open in the previous paper by \citet{BeiLuMa21}.

\item The price of \EFM is (asymptotically) the same as the price of \EFX in all the settings.
This is a potential evidence that \EFM, although defined in relation to \EFOne, is more similar to \EFX in nature.
\end{itemize}

\begin{table}[t]
\centering
\caption{Price of Envy-Freeness for Two Agents with (Un)Scaled Utilities}
\label{table:our-results-2}
\begin{tabular}{@{}*{4}{l}@{}}
\toprule
Price of~\ldots & \EFOne & \EFM \ / \EFXM & \EFX \\
\midrule
Scaled & $\frac87$ (Thm.~\ref{thm:indivisible-2-agents}) & $\frac32$ (Thm.~\ref{thm:scaled-2-agents}) & $\frac32$~\citep{BeiLuMa21} \\
Unscaled & $2$~\citep{BuLiLi22}  & $2$ (Thm.~\ref{thm:unscaled-2-agents}) & $2$ (Thm.~\ref{thm:unscaled-2-agents}) \\
\bottomrule
\end{tabular}
\end{table}

\begin{table}[t]
\centering
\caption{Price of Envy-Freeness for $n$ Agents with (Un)Scaled Utilities}
\label{table:our-results-n}
\begin{tabular}{@{}*{4}{l}@{}}
\toprule
Price of~\ldots & \EFOne & \EFM \ / \EFXM & \EFX \\
\midrule
Scaled & $\Theta(\sqrt{n})$ \cite{BeiLuMa21,BarmanBhSh20} & $\Theta(\sqrt{n})$ (Thm.~\ref{thm:scaled-n-agents}) & $\Theta(\sqrt{n})$ \cite{BuLiLi22} \\
Unscaled & $\Theta(n)$ \cite{BuLiLi22}  & $\Theta(n)$ (Thm.~\ref{thm:unscaled-n-agents}) & $\Theta(n)$ \cite{BuLiLi22} \\
\bottomrule
\end{tabular}
\end{table}

\subsection{Additional Related Work}
\label{sect:relatedwork}

While the price of fairness concept captures the efficiency loss in the best fair allocation, \citet{BeiLuMa21} introduced the concept of \emph{strong price of fairness}, which captures the efficiency loss in the \emph{worst} allocation.
The strong price of fairness has proven to provide meaningful guarantees for fairness notions defined in the form of welfare maximizers, e.g., maximum Nash welfare, maximum Egalitarian welfare, and leximin.
The strong price of fairness is, however, too demanding to yield any non-trivial guarantee for fairness notions of interest in this paper.
To be more specific, the strong price of \EFOne and the strong price of \EFX are~$\infty$~\citep{BeiLuMa21}.
We thus only focus on the price of fairness in our paper.

The interplay between fairness and efficiency has been extensively studied in the literature of fair division for both divisible and indivisible goods settings~\citep{aumann2015efficiency,aumann2012computing,BKV18,garg2021fair}.
An immediate question is whether a fairness criterion is compatible with Pareto optimality (PO) which is a rather weak economic efficiency measurement.
In particular, both envy-freeness and EF1 can be combined with PO by finding the allocation that satisfies maximum Nash welfare in the divisible and indivisible settings respectively~\citep{SS19,CaragiannisKuMo19}.

Another related direction considers the problem of maximizing social welfare subject to fairness constraints.
For divisible goods, this optimization problem with the envy-freeness constraints for piecewise-constant valuations can be solved optimally~\citep{cohler2011optimal}.
However, this problem has an inapproximability hardness with a polynomial factor if an additional requirement of connectivity on the received piece of cake is imposed~\citep{bei2012optimal}.
For indivisible goods, the problem with \EFOne\ / \EFX criteria is well understood recently~\citep{BGJ+19,aziz2020computing,BuLiLi22}.
With scaled utilities, \citet{BGJ+19} gave inapproximability results for general numbers of agents and items while \citet{aziz2020computing} showed that the problem subject to the \EFOne\ / \EFX constraints is NP-hard for some special cases.
\citet{BuLiLi22} gave a complete landscape on the computational complexity and approximability of maximizing the social welfare within \EFOne\ / \EFX allocations of indivisible goods for both scaled and unscaled utilities.

\section{Preliminaries}

For any positive integer~$t \in \mathbb{N}$, let~$[t] \coloneqq \{1, 2, \dots, t\}$.
We consider both indivisible-goods and mixed-goods settings with a set of~$n$ agents~$N = [n]$.
The mixed-goods setting involves a set of~$m$ \emph{indivisible goods}~$\indivisibleGoods = \{g_1, g_2, \dots, g_m\}$ and a set of~$\mbar$ \emph{homogeneous divisible goods}~$D = \{d_1, d_2, \dots, d_{\mbar}\}$.
The indivisible-goods setting can be viewed as a special case where $D = \emptyset$.

\paragraph{Utilities}
Denote by $\val = (u_1, u_2, \dots, u_n)$ the \emph{utility profile}, which specifies how the agents value the goods.
Each agent~$i \in N$ has a non-negative utility~$\utility{i}{g}$ for each (in)divisible good~$g \in \indivisibleGoods \cup D$.
A \emph{bundle} is a tuple consisting of a (possibly empty) set of indivisible goods~$\indivisibleGoods' \subseteq \indivisibleGoods$ and an $\mbar$-dimensional vector~$\x' = (x'_1, x'_2, \dots, x'_{\mbar}) \in [0, 1]^{\mbar}$, in which each coordinate specifies the fraction of the corresponding homogeneous divisible good.
The bundle is written as $(\indivisibleGoods', \x')$.
We assume \emph{additive utilities}, i.e., for all~$i \in N$, $\indivisibleGoods' \subseteq \indivisibleGoods$, and $\x' \in [0, 1]^{\mbar}$, the utility of agent~$i$ for bundle~$(\indivisibleGoods', \x')$ is defined as:
\[
\utility{i}{\indivisibleGoods', \x'} \coloneqq \sum_{g \in \indivisibleGoods'} \utility{i}{g} + \sum_{\kbar = 1}^{\mbar} x'_{\kbar} \cdot \utility{i}{d_{\kbar}}.
\]
We slightly abuse the notation by letting $\utility{i}{\indivisibleGoods'} \coloneqq \sum_{g \in \indivisibleGoods'} \utility{i}{g}$ and $\utility{i}{\x'} \coloneqq \sum_{\kbar = 1}^{\mbar} x'_{\kbar} \cdot \utility{i}{d_{\kbar}}$.

An \emph{instance}, written as $\langle N, \indivisibleGoods, D, \val \rangle$, consists of agents~$N$, indivisible goods~$\indivisibleGoods$, divisible goods~$D$, and the agents' utility profile~$\val$.

\begin{remark}[Piecewise-constant valuation assumption]
In the paper by \citet{BeiLiLi21} where the mixed-goods model was first introduced, the divisible part of the resource is modelled by a (heterogeneous) \emph{cake} on which each agent has a \emph{value density function}.
Our setting with multiple homogeneous divisible goods is equivalent to the setting with a cake where all agents\rq{} value density functions are \emph{piecewise-constant}.
Although being somewhat more restrictive than general valuations, piecewise-constant valuations are standardly assumed in the cake-cutting literature for their ability to represent natural real functions with arbitrarily good precision and to be encoded succinctly~\citep[e.g.,][]{cohler2011optimal,bei2012optimal,aumann2012computing,brams2012maxsum,chen2013truth,aumann2015efficiency,menon2017deterministic,bei2017cake,bei2020truthful,BuSoTa23} as well as in the price of fairness literature~\citep[e.g.,][]{CaragiannisKaKa12}.
Especially, when social welfare is concerned, most, if not all, of the previous work in cake-cutting assumes piecewise-constant value density functions.
In particular, the allocation with optimal social welfare may even fail to exist for general value density functions.
We defer the detailed discussion to \Cref{append:generalvaluation}.
\end{remark}

\paragraph{Allocations}
An \emph{allocation} is denoted by $\alloc = (\bundle_1, \bundle_2, \dots, \bundle_n)$, where for each~$i \in [n]$, $\bundle_i = (\indivisibleGoods_i, \x_i = (x_{i 1}, x_{i 2}, \dots, x_{i \mbar}))$ is the bundle allocated to agent~$i$.
Allocation~$\alloc$ is feasible if
\begin{itemize}
\item for any pair of agents~$i \neq j$, $\indivisibleGoods_i \cap \indivisibleGoods_j = \emptyset$; and

\item for each~$\kbar = 1, 2, \dots, \mbar$, $\sum_{i \in N} x_{i \kbar} \leq 1$.
\end{itemize}
Allocation~$\alloc$ is \emph{complete} if $\bigcup_{i \in N} \indivisibleGoods_i = \indivisibleGoods$ and for each~$\kbar \in [\mbar]$, $\sum_{i \in N} x_{i \kbar} = 1$; otherwise, this is a \emph{partial} allocation.
In this work, we consider feasible allocations and allow partial allocations.

\paragraph{Scaling}
Agents' utilities are \emph{scaled} if for all~$i \in N$, $\utility{i}{\indivisibleGoods \cup D} = 1$, and \emph{unscaled} otherwise.
In this work, we consider both scaled and unscaled utilities.

\paragraph{Social Welfare}
The \emph{(utilitarian) social welfare} of an allocation~$\alloc = (\bundle_1, \bundle_2, \dots, \bundle_n)$ is defined as $\SW(\alloc) \coloneqq \sum_{i \in N} \utility{i}{\bundle_i}$.
The optimal social welfare for an instance~$I$, denoted by~$\OPT(I)$, is the maximum social welfare over all allocations of this instance.

\paragraph{Fairness Notions}

To better understand the intuition behind fairness notions like \EFOne, \EFM, \EFXM and \EFX, we start with the definition of envy-freeness~\citep{Foley67}.
An allocation~$\alloc = (\bundle_1, \bundle_2, \dots, \bundle_n)$ is said to satisfy \emph{envy-freeness (\EF{})} if for any pair of agents~$i, j \in N$, we have $\utility{i}{\bundle_i} \geq \utility{i}{\bundle_j}$.
\EFOne relaxes envy-freeness by allowing envies ``up to one good''.

\begin{definition}[\EFOne~\citep{LiptonMaMo04,Budish11}]
An allocation~$(\indivisibleGoods_1, \indivisibleGoods_2, \dots, \indivisibleGoods_n)$ of indivisible goods~$\indivisibleGoods$ is said to satisfy \emph{envy-freeness up to one good (\EFOne)} if for any pair of agents~$i, j \in N$ and $\indivisibleGoods_j \neq \emptyset$, there exists a good~$g \in \indivisibleGoods_j$ such that $\utility{i}{\indivisibleGoods_i} \geq \utility{i}{\indivisibleGoods_j \setminus \{g\}}$.
\end{definition}

When allocating indivisible goods, another commonly studied fairness notion is \emph{envy-freeness up to any good (\EFX)}, which is substantially stronger than \EFOne.

\begin{definition}[\EFX~\citep{CaragiannisKuMo19,PlautRo20}]
An allocation~$(\indivisibleGoods_1, \indivisibleGoods_2, \dots, \indivisibleGoods_n)$ of indivisible goods~$\indivisibleGoods$ is said to satisfy \emph{envy-freeness up to any good (\EFX)} if for any pair of agents~$i, j \in N$ and $M_j\neq \emptyset$, $\utility{i}{\indivisibleGoods_i} \geq \utility{i}{\indivisibleGoods_j \setminus \{g\}}$ holds for any good~$g \in \indivisibleGoods_j$.
\end{definition}

We now proceed to define \EFM in the mixed-goods setting, which, intuitively, requires that an agent compares her bundle to another agent's bundle using \EFOne criterion if the latter bundle consists of only indivisible goods; otherwise, the stronger envy-freeness criterion is invoked.

\begin{definition}[\EFM~{\citep[Definition~2.3]{BeiLiLi21}}]
An allocation~$\alloc = (\bundle_1, \bundle_2, \dots, \bundle_n)$ of mixed goods is said to satisfy \emph{envy-freeness for mixed goods (\EFM)} if for any pair of agents~$i, j \in N$,
\begin{itemize}
\item if agent~$j$'s bundle consists of only indivisible goods, i.e., $\x_j=\mathbf{0}$, then $\utility{i}{\bundle_i} \geq \utility{i}{\bundle_j}$ or there exists some (indivisible) good~$g \in \indivisibleGoods_j$ such that $\utility{i}{\bundle_i} \geq \utility{i}{\indivisibleGoods_j \setminus \{g\}, \x_j}$;

\item otherwise, $\utility{i}{\bundle_i} \geq \utility{i}{\bundle_j}$.
\end{itemize}
\end{definition}

\citet{BeiLiLi21} also suggested that by substituting the EFX criterion for EF1 in cases where an agent compares her bundle to those containing only indivisible goods, a stronger variant of \EFM can be derived.
Its formal definition was later introduced in the work of \citet{NishimuraSu23}.

\begin{definition}[\EFXM~\citep{NishimuraSu23}]
An allocation~$\alloc = (\bundle_1, \bundle_2, \dots, \bundle_n)$ of mixed goods is said to satisfy \emph{envy-freeness up to any good for mixed goods (\EFXM)} if for any pair of agents~$i, j \in N$,
\begin{itemize}
\item if agent~$j$'s bundle consists of only indivisible goods, i.e., $\x_j=\mathbf{0}$, then $\utility{i}{\bundle_i} \geq \utility{i}{\bundle_j}$ or $\utility{i}{\bundle_i} \geq \utility{i}{\indivisibleGoods_j \setminus \{g\}, \x_j}$ for \emph{any} (indivisible) good~$g \in \indivisibleGoods_j$;

\item otherwise, $\utility{i}{\bundle_i} \geq \utility{i}{\bundle_j}$.
\end{itemize}
\end{definition}


\subsection{Price of Fairness}
\label{sec:prelim:PoF}

We are now ready to define the central concept of the paper---the price of fairness~\citep{BertsimasFaTr11,CaragiannisKaKa12,BeiLuMa21,BarmanBhSh20}.

\begin{definition}[Price of Fairness $P$]
For any given fairness criteria $P$ and any instance~$I$, we define
\[
\text{price of $P$ for instance~$I$} = \frac{\OPT(I)}{\SW(\alloc^*)},
\]
where~$\alloc^*$ is a (partial) allocation that satisfies $P$ and has the maximum social welfare.

The overall \emph{price of $P$} is then defined as the supremum price of $P$ across all instances.
\end{definition}
Note that when we define the price of \EFOne and \EFX, we consider instance~$I$ with $D = \emptyset$.









\paragraph{Partial Allocation and Resource Monotonicity}
As a remark, when defining the price of \EFM, \EFXM and \EFX, we include partial allocations into our consideration.
To illustrate the idea here, let us first introduce the concept of \emph{resource monotonicity} with respect to social welfare\footnote{Prior research has also explored the concept of resource monotonicity subject to Pareto optimality.~\citep[see, e.g.,][]{segal2018resource}.}~\citep[see, e.g.,][]{BuLiLi22}.
Given a fairness property~$P$, we say resource monotonicity \emph{holds} for property~$P$ if for any instance, there always exists a complete allocation satisfying~$P$ that has a weakly higher social welfare than any other partial allocation satisfying~$P$.
Note that when the existence of property~$P$ is not guaranteed, resource monotonicity fails for the property.\footnote{For instance, resource monotonicity fails for envy-freeness when allocating indivisible goods.}

Regarding \EFX, \citet{BuLiLi22} showed that for two agents, resource monotonicity holds for \EFX.
In general, however, resource monotonicity \emph{fails} for \EFX~\citep{BuLiLi22}.
Put differently, \citet{BuLiLi22} provided an instance in which a partial \EFX allocation has a higher social welfare than any complete \EFX allocation.
Note also that it is a major open problem whether a complete \EFX allocation always exists~\citep{AmanatidisAzBi23}.
We have the following two scenarios:
\begin{enumerate}
\item if an \EFX allocation of indivisible goods always exists, its failure of resource monotonicity suggests that some partial \EFX allocation may have higher social welfare than any complete \EFX allocation; and
\item otherwise we may not even have a complete \EFX allocation.
\end{enumerate}
In either case, independent of the existence of \EFX, it is more natural to include partial allocations when defining the price of \EFX.
Since \EFXM is a generalization of \EFX, it is also natural to consider partial allocations in its price of fairness definition.

In terms of \EFM, its existence is guaranteed~\citep{BeiLiLi21}; however, we do not know whether resource monotonicity holds for \EFM (in the mixed-goods setting) or not.
If any partial \EFM allocation can be extended to a complete \EFM allocation with a weakly higher social welfare, it makes no difference whether or not partial allocations are included when defining the price of \EFM.
Otherwise, it would be more natural to use the ``better'' partial allocation for characterizing the price of \EFM, as opposed to forcing the allocation being complete.
Thus, it is again more natural to include partial allocations into our consideration.

For \EFOne, as noted in~\citep{BuLiLi22}, it is easy to see that any partial \EFOne allocation can be extended to a complete \EFOne allocation with a weakly higher social welfare by carrying on the \emph{envy-graph procedure}.
Therefore, we do not need to consider partial allocations for the price of \EFOne.

\section{Two Agents}

In this section, we establish tight bounds on the price of \EFOne\ / \EFX\ / \EFM \ / \EFXM for two agents with scaled or unscaled utilities.

\subsection{Price of \EFOne (for Indivisible Goods)}

For unscaled utilities, the price of \EFOne is exactly~$2$ due to \citet{BuLiLi22}.
For scaled utilities, the price of \EFOne is between~$\frac{8}{7}$ and~$\frac{2}{\sqrt{3}}$ due to \citet{BeiLuMa21}.
In the following, we close the gap by showing that the price of \EFOne is~$\frac{8}{7}$.

\begin{theorem}
\label{thm:indivisible-2-agents}
For $n = 2$ and scaled utilities (over indivisible goods), the price of \EFOne is $\frac{8}{7}$.
\end{theorem}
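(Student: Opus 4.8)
The plan is to establish the bound in two parts: a lower bound construction giving $\frac{8}{7}$ (which is already known from \citet{BeiLuMa21}, so I may simply cite or briefly reproduce it), and a matching upper bound showing that for any two-agent instance with scaled utilities, there exists an \EFOne allocation whose social welfare is at least $\frac{7}{8} \cdot \OPT(I)$. The upper bound is the substantive direction. First I would fix notation: with two agents, scaling means $\utility{1}{\indivisibleGoods} = \utility{2}{\indivisibleGoods} = 1$. Let $\alloc^{\OPT} = (O_1, O_2)$ be a welfare-maximizing allocation; note that in an optimal allocation each good goes to the agent who values it more, so $\OPT(I) = \sum_{g \in \indivisibleGoods} \max\{\utility{1}{g}, \utility{2}{g}\}$. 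Since for two agents a welfare-maximal allocation need not be \EFOne, the goal is to argue we never lose more than a factor $\frac{8}{7}$ when repairing it.

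The key structural step I would exploit is a \emph{cut-and-choose / divisor} style argument specialized to two agents, as is standard in this literature. A clean approach: consider the ``leximin''-like or threshold reallocation where we look at the optimal bundle $O_1$ for agent~$1$ and check whether the allocation is already \EFOne; if agent~$1$ envies $O_2$ by more than one good (or symmetrically), I would move a carefully chosen good across the cut. The main technical device is to parametrize the worst case. I expect the tight instance to involve agent~$1$'s and agent~$2$'s valuations being ``spread'' across a few goods so that any \EFOne repair forces transferring a good worth a controlled fraction of total value. I would set up the optimization: among all allocations that (i) assign each good to maximize welfare and (ii) can be made \EFOne by a single boundary swap, minimize the ratio $\SW(\alloc^*)/\OPT$, and show the infimum is $\frac{7}{8}$. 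Concretely I would let $x = \utility{1}{O_2}$ and $y = \utility{2}{O_1}$ denote cross-utilities and write $\OPT = \utility{1}{O_1} + \utility{2}{O_2} = (1 - x) + (1 - y)$, then bound the welfare recoverable by an \EFOne allocation as a function of $x, y$ and the largest single-good value $g^* \le \max(x,y)$, and minimize.

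The hardest part will be the case analysis in the upper bound: showing that \emph{some} \EFOne allocation recovers $\frac{7}{8}$ of the optimum requires handling the situation where the welfare-optimal split is badly envious and every single-good transfer that restores \EFOne is costly. I anticipate needing to split on whether the envy after the optimal split can be killed by removing the envied agent's most-valued good, and on the relative sizes of the two agents' ``large'' goods. A convenient reduction is to observe that with two agents one can always consider the two ``extreme'' \EFOne allocations obtained by giving one agent her favorite large good plus a balancing set, and argue at least one of them attains the ratio; the worst case is pinned down by equating the welfare of the two candidate allocations, which should yield the equation whose solution gives $7/8$. I would verify that the $\frac{8}{7}$ lower-bound instance from \citet{BeiLuMa21} saturates exactly this equality, confirming tightness.

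Thus the proof reduces to (a) citing or restating the lower-bound instance to get $\text{price of \EFOne} \ge \frac{8}{7}$, and (b) a two-agent case analysis producing an \EFOne allocation with $\SW(\alloc^*) \ge \frac{7}{8}\OPT(I)$, giving $\text{price of \EFOne} \le \frac{8}{7}$; together these yield equality.
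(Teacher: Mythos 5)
Your overall framing is right---cite the $8/7$ lower bound from \citet{BeiLuMa21} and prove that every two-agent scaled instance admits an \EFOne allocation with welfare at least $\frac{7}{8}\OPT(I)$, where $\OPT$ is attained by giving each good to the agent who values it more. But the substantive content of the upper bound is missing, and one of the ingredients you do commit to is wrong. You suggest that the welfare-optimal split can be repaired ``by a single boundary swap'' or a ``single-good transfer.'' It cannot in general: if $T_1 = \{g : u_1(g) \ge u_2(g)\}$ consists of many goods that agent~$2$ values $0.8$ in total while $u_2(T_2) = 0.2$, no single transfer restores \EFOne; one must move a bundle worth a constant fraction of the total. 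The paper's construction is to let agent~$2$ partition $T_1$ into two (or, in a middle regime, three) parts that are as balanced as possible in \emph{her} valuation, hand agent~$1$ the part(s) with the larger surplus $\sum_g (u_1(g)-u_2(g))$, and give the rest to agent~$2$. This guarantees agent~$2$ is not strongly envious while retaining at least half (resp.\ two thirds) of the surplus of $T_1$. Nothing in your proposal plays the role of this balanced partition, and your parametrization by cross-utilities and ``the largest single-good value'' does not obviously produce it; the paper's single parameter is the surplus $y = u_1(T_1) - u_2(T_1)$, with $\OPT = 1+y$ and the case split at $y = \tfrac13$ and $y=\tfrac12$.

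The second missing piece is what happens after the repair: once a sizeable chunk of $T_1$ is moved to agent~$2$, agent~$1$ may now strongly envy agent~$2$. The paper handles this with a separate iterative lemma (its Proposition on one-by-one reassignment): repeatedly move a good of $T_1$ back from agent~$2$ to agent~$1$, or swap the two bundles, in a way that provably never decreases social welfare and never reintroduces strong envy from agent~$2$. Your plan has no analogue of this step, and it is not a formality---without it you only control one direction of envy. Finally, your intuition that the worst case is ``pinned down by equating the welfare of two candidate allocations'' does not match where $8/7$ actually comes from: the two-part partition yields welfare $\ge 1 + \tfrac{y}{2}$, giving ratio $\frac{1+y}{1+y/2} \le \frac87$ on $y \le \tfrac13$, while the three-part partition on $\tfrac13 \le y \le \tfrac12$ yields $\ge 1 + \tfrac{2y}{3}$ and ratio at most $\tfrac98 < \tfrac87$; the bound is the maximum over these regimes, attained at the threshold $y = \tfrac13$. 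As written, the proposal is a statement of intent for the hard part rather than a proof of it.
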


Some additional notations are introduced here for the sake of clarity during this proof.
Denote by~$T_1$ the subset of goods that agent~$1$ values no less than agent~$2$, and by~$T_2$ the remaining goods:
\[
T_1 = \{g \in \indivisibleGoods \mid u_1(g) \geq u_2(g)\} \qquad \text{and} \qquad T_2 = \indivisibleGoods \setminus T_1.
\]
Let the \emph{surplus} of a bundle~$\indivisibleGoods'$, denoted by~$\SP(\indivisibleGoods')$, be how much agent~$1$ values bundle~$\indivisibleGoods'$ more than agent~$2$; formally stated as
\[
\SP(M') = \sum_{g \in \indivisibleGoods'} (u_1(g) - u_2(g)).
\]

We first state and prove a useful proposition, and then provide the proof of \Cref{thm:indivisible-2-agents}.
Given an allocation~$(\indivisibleGoods_1, \indivisibleGoods_2, \dots, \indivisibleGoods_n)$ of indivisible goods~$\indivisibleGoods$, we say that an agent~$i \in N$ \emph{strongly envies} an agent~$j$ if and only if for any~$g \in \indivisibleGoods_j$, $\utility{i}{\indivisibleGoods_i} < \utility{i}{\indivisibleGoods_j \setminus \{g\}}$.
It can be seen that given an allocation, if some agent strongly envies some other agent, then the allocation is not \EFOne.
Moreover, if every agent does not strongly envy any other agent, then the allocation is \EFOne.

\begin{proposition}
\label{prop:1-by-1-reassign}
Suppose agent~$2$ strongly envies agent~$1$ in the allocation $(T_1, T_2)$.
If $T_1$ can be partitioned into~$T_A'$ and~$T_B'$ such that agent~$2$ does not strongly envy agent~$1$ in the allocation~$\alloc = (T_A', T_B' \cup T_2)$, then there exists an \EFOne allocation~$\alloc'$ with $\SW(\alloc') \geq \SW(\alloc)$.
\end{proposition}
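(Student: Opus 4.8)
The plan is to reduce the statement to producing a high‑welfare \EFOne allocation of a very restricted shape, and then to pin down such an allocation by a surplus‑minimization argument. Throughout I use the observation recorded just before the proposition: an allocation of indivisible goods is \EFOne exactly when no agent strongly envies another, so it suffices to kill both agents' strong envy.

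First I would set up the welfare bookkeeping. With two agents the optimal allocation is exactly $(T_1,T_2)$, so $\OPT=u_1(T_1)+u_2(T_2)$. For any $S\subseteq T_1$ let $\alloc_S\coloneqq(T_1\setminus S,\,S\cup T_2)$ be the allocation that hands the goods of $S$ to agent~$2$. Moving a good $g\in T_1$ to agent~$2$ costs precisely its surplus $u_1(g)-u_2(g)\ge 0$, so $\SW(\alloc_S)=\OPT-\SP(S)$. In particular $\alloc=\alloc_{T_B'}$ and $\SW(\alloc)=\OPT-\SP(T_B')$; hence it is enough to exhibit some $S\subseteq T_1$ with $\SP(S)\le\SP(T_B')$ for which $\alloc_S$ is \EFOne. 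I would then introduce the family $\mathcal F=\{S\subseteq T_1 : \text{agent }2\text{ does not strongly envy agent }1\text{ in }\alloc_S\}$. It is nonempty, since $T_B'\in\mathcal F$ by hypothesis, and it is upward closed: enlarging $S$ only grows agent~$2$'s bundle and shrinks agent~$1$'s, which can never newly create strong envy on agent~$2$'s side. Let $S^\ast$ minimize $\SP(S)$ over $\mathcal F$. Then $\SP(S^\ast)\le\SP(T_B')$, so $\SW(\alloc_{S^\ast})\ge\SW(\alloc)$, and agent~$2$ does not strongly envy agent~$1$ in $\alloc_{S^\ast}$ by membership in $\mathcal F$. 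It remains only to show that $\alloc_{S^\ast}$ is \EFOne, i.e.\ that agent~$1$ does not strongly envy agent~$2$ either; taking $\alloc'=\alloc_{S^\ast}$ then finishes the proof.

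The hard part is this final step, and I would attack it by contradiction. Write $P=T_1\setminus S^\ast$ and $Q=S^\ast\cup T_2$, and suppose agent~$1$ strongly envies agent~$2$ in $\alloc_{S^\ast}$. The goal is to return some good $g\in S^\ast$ of \emph{strictly positive} surplus to agent~$1$: this strictly decreases $\SP$, so if the resulting allocation still keeps agent~$2$ from strongly envying it contradicts the minimality of $\SP(S^\ast)$. By that same minimality, returning any positive‑surplus good of $S^\ast$ must reinstate agent~$2$'s strong envy, which (removing the returned good itself as agent~$2$'s witness) forces $u_2(Q)-u_2(g)<u_2(P)$ for every such $g$. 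The structural leverage is that $P\subseteq T_1$ gives $u_1(P)\ge u_2(P)$, so agent~$1$'s strong envy makes agent~$2$'s bundle $Q$ carry a large amount of agent‑$1$ value; together with the hypothesis that agent~$2$ strongly envies at the starting allocation $(T_1,T_2)$, this should constrain how the surplus of $S^\ast$ is distributed.

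The main obstacle is that the two agents' valuations are independent, so the naive choice—returning agent~$1$'s most valued good in $Q$—can itself trigger agent~$2$'s strong envy when that good also has high value for agent~$2$; indeed a fixed left‑to‑right reassignment can jump directly from ``agent~$2$ strongly envies'' to ``agent~$1$ strongly envies,'' which is exactly why a plain one‑by‑one sweep is insufficient and why the surplus‑\emph{minimal} selection is essential. The real work is therefore to use the quantitative slack in agent~$2$'s non‑envy inequality at $\alloc_{S^\ast}$, combined with $u_1(P)\ge u_2(P)$ and the initial strong‑envy hypothesis, to locate a positive‑surplus good whose removal is simultaneously surplus‑reducing and safe for agent~$2$; balancing these two requirements against each other is the crux of the argument.
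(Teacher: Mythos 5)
There is a genuine gap. Your reduction is fine up to the last step: the welfare identity $\SW(\alloc_S)=\OPT-\SP(S)$, the upward-closedness of $\mathcal F$, and the observation that an $\SP$-minimal $S^\ast\in\mathcal F$ already beats $\SW(\alloc)$ are all correct. But the entire content of the proposition is the claim that some such allocation also frees agent~$1$ of strong envy, and you explicitly stop short of proving it: your final paragraph names the difficulty (``locate a positive-surplus good whose removal is simultaneously surplus-reducing and safe for agent~$2$'') and calls it the crux without resolving it. Worse, the contradiction scheme you sketch cannot work as stated. If every good in $S^\ast$ has zero surplus ($u_1(g)=u_2(g)$), then removing any of them leaves $\SP$ unchanged, so no contradiction with $\SP$-minimality is available; and since all members of $\mathcal F$ then tie at $\SP=0$, the minimizer may well be one in which agent~$1$ strongly envies (e.g.\ $S^\ast=T_1$ when all goods of $T_1$ are identical to both agents and $T_2=\emptyset$). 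Even for positive-surplus goods, you concede you cannot guarantee that $S^\ast\setminus\{g\}$ stays in $\mathcal F$.

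The paper's proof avoids this dead end by not confining itself to allocations of the form $(T_1\setminus S,\,S\cup T_2)$. It runs a one-by-one reassignment: while agent~$1$ strongly envies, pick $g$ in agent~$2$'s bundle with $g\in T_1$; if agent~$2$ remains \EFOne after losing $g$, move $g$ over (welfare weakly increases since $u_1(g)\geq u_2(g)$); otherwise \emph{swap the two bundles wholesale}. The swap is the move your framework cannot express---agent~$2$'s resulting bundle no longer contains $T_2$---and it is exactly what handles the case you identify as problematic: when both ``agent~$1$ strongly envies'' and ``agent~$2$ would strongly envy after losing $g$'' hold simultaneously, one checks that swapping strictly increases welfare and immediately yields an \EFOne allocation, terminating the process. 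To salvage your approach you would either need to prove that the restricted family always contains an \EFOne member of no larger surplus (not obvious, and not what the paper does), or enlarge your move set to include the swap.
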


\begin{proof}
If agent~$1$ does not strongly envy agent~$2$ in $\alloc$, then let $\alloc' = \alloc$ and we are done.
Otherwise, we apply the following iterative ``one-by-one reassignment'' process:
\begin{enumerate}
\item Suppose, at the beginning of the iteration, agent~$1$ has bundle~$T_1'$ and agent~$2$ has bundle~$T_2'$.
Select an arbitrary good~$g \in T_2' \cap T_1$.
Since agent~$1$ strongly envies~$T_2'$, she would still envy the bundle if~$g$ is excluded, that is, $u_1(T'_1) < u_1(T'_2 \setminus \{g\})$.

\item If $u_2(T_2' \setminus \{g\}) \geq u_2(T_1')$, assign good~$g$ to agent~$1$.
Otherwise we have $u_2(T'_2 \setminus \{g\}) < u_2(T'_1)$.
We now swap the two agents' bundles, i.e., let agent~$1$ get bundle~$T'_2$ and agent~$2$ get bundle~$T'_1$.
Note that given the updated allocation, agent~$2$ is still \EFOne towards agent~$1$.

\item If agent~$1$ still strongly envies agent~$2$'s bundle, go back to step 1 and start another iteration of this process.
\end{enumerate}

Then we prove that the following two invariants hold at the end of each iteration: (a) the social welfare does not decrease throughout the process; (b) agent~$2$ does not strongly envy agent~$1$'s bundle.
\begin{itemize}
\item If agent~$2$ does not envy agent~$1$ even when~$g$ is excluded from her bundle, moving~$g$ to agent~$1$'s bundle would increase social welfare, because $g \in T_1$,
indicating that agent~$1$ values it more than agent~$2$; if otherwise, the two agents envy each other's bundle, swapping their bundle would also increase social welfare, and item~$g$, too, is reassigned to the agent that values it more.

\item If agent~$2$ does not envy agent~$1$ even when $g$ is excluded from her bundle, adding $g$ to agent~$1$'s bundle would not lead to agent~$2$ strongly envying the bundle; if otherwise, both agents would not envy each other's bundle after swapping, and agent~$2$ certainly does not strongly envy agent~$1$'s bundle after $g$ is reassigned.
\end{itemize}
The two invariants being established, it can be seen that \EFOne is guaranteed after the whole one-by-one reassignment process, and the social welfare of the resulting allocation $\alloc'$ is no less than $\SW(\alloc)$.
\end{proof}

We are now ready to establish \Cref{thm:indivisible-2-agents}.

\begin{proof}[Proof of \Cref{thm:indivisible-2-agents}]
For brevity, let $y$ be $\SP(T_1)$.
Since scaled utilities are considered here, we should have that $\SP(T_2) = -\SP(T_1) = -y$.
Note that assigning each item to the agent that values it more, i.e., assigning $T_1$ to agent~$1$ and $T_2$ to agent~$2$, achieves the optimal social welfare, so for any instance $I$ with two agents, scaled utilities, and only indivisible goods,
$$\OPT(I) = u_1(T_1) + u_2(T_2) = \SP(T_1) + u_2(T_1) + u_2(T_2) = 1 + y.$$
We divide our proof into three cases.

\paragraph{Case~1: $y \geq \frac12$}
This case is trivial, since the optimal allocation (allocating each item to the agent who values it more) would have already achieved envy-freeness. More formally speaking, in the optimal allocation, the bundle assigned to agent~$1$ is $T_1$, while agent~$2$ receives $T_2$. Hence, the utility of agent~$1$'s bundle
$$u_1(T_1) = u_2(T_1) + \SP(T_1) \geq y \geq \frac12.$$
Since agent~$1$ has received more than half of the total value of all items from her perspective, there is no chance that she would envy agent~$2$'s bundle. Similarly, for agent~$2$,
$$u_2(T_2) = u_1(T_2) - \SP(T_2) = u_1(T_2) + y \geq \frac12.$$
Therefore, neither would agent~$2$ envy agent~$1$'s bundle, and envy-freeness is proved. Because \EFOne can be achieved via an allocation optimizing social welfare, the price of \EFOne for such instances is $1$.

\paragraph{Case~2: $y \leq \frac13$}
If the optimal allocation already satisfies \EFOne, we are done. For this reason, we only consider instances where the optimal allocation violates \EFOne, and suppose without loss of generality that agent~$2$ strongly envies agent~$1$.
Note that agent~$1$ does not strongly envy agent~$2$'s bundle in the optimal allocation, for otherwise exchanging their bundle would lead to an allocation with higher social welfare, contradicting to optimality.
Then we let agent~$2$ partition $T_1$ into two subsets ``as evenly as possible'', maximizing the utility of the subset with lower utility from her perspective. Call the two subsets $T_A$ and $T_B$.
Suppose $\SP(T_A) \geq \SP(T_B)$. We temporarily assign $T_A$ to agent~$1$, and $T_B\cup T_2$ to agent~$2$, and it can be proved that, at this time, (a) agent~$2$ does not strongly envy agent~$1$; (b) the social welfare is no less than $\frac y2 + 1$.
\begin{itemize}
    \item If $u_2(T_B)\geq u_2(T_A)$, then agent~$2$ values her bundle more than agent~$1$'s, and envy-freeness is guaranteed. If $u_2(T_B)\leq u_2(T_A)$, there should exist a certain item $g\in T_A$ such that $u_2(T_B) \geq u_2(T_A \setminus \{g\})$, for otherwise, moving this item to $T_B$ would result in a ``more even'' partition. Therefore, $u_2(T_B\cup T_2) \geq u_2(T_B)\geq u_2(T_A\setminus \{g\})$.
    \item The lower bound of social welfare can be established as follows:
\begin{align*}
\SW(T_A, T_B \cup T_2)  & = \SP(T_A) + u_2(T_A\cup T_B\cup T_2) \\
& \ge \frac12 \SP(T_1) + u_2(M) \\
& = \frac y2 + 1.
\end{align*}

The inequality can be derived by the fact that $\SP(T_A)\geq \SP(T_B)$.
\end{itemize}

At this time, agent~$1$ can still strongly envy agent~$2$, and we apply \Cref{prop:1-by-1-reassign} to derive a \EFOne allocation $\alloc'$ with social welfare no less than $\frac y2 + 1$.
Consequently, for instances with $y\leq\frac13$, there exists an allocation $\alloc'$ with $\SW(\alloc)\geq\frac y2 + 1$ satisfying \EFOne, ergo the price of \EFOne
$$\frac{\OPT(I)}{\SW(\alloc')} \leq \frac{1+y}{1+\frac y2} \leq \frac 87.$$

\paragraph{Case~3: $\frac 13 \leq y \leq \frac 12$}
Again, suppose without loss of generality that agent~$2$ strongly envies agent~$1$ under optimal allocation.
Let agent~$2$ partition~$T_1$ into three subsets ``as evenly as possible'', maximizing the subset with minimum utility from her perspective. Let the three subsets be~$T_A$, $T_B$ and~$T_C$, and $u_2(T_A)\geq u_2(T_B)\geq u_2(T_C)$.
Three subcases are studied here.

\emph{Subcase 3.1:} $u_2(T_C)\ge \frac 16$.
Since $u_2(T_2) = u_1(T_2) - \SP(T_2) \geq y \geq\frac13$, assigning any one of~$T_A$, $T_B$, and~$T_C$ to agent~$2$ would result in she claiming more than half of all items' total utility, eliminating the possibility of agent~$2$ envying agent~$1$.
We assign the subset with smallest surplus, dubbed~$T_{13}$, to agent~$2$, and the other two, dubbed~$T_{11}$ and~$T_{12}$, to agent~$1$.
Thus, $\SP(T_{11}\cup T_{12})\geq\frac23\SP(T_1)$.
The social welfare of such allocation~$\alloc$ can be lower bounded via the following calculation
\[
\SW(\alloc) = \SP(T_{11} \cup T_{12}) + u_2(M) \geq 1 + \frac23 y.
\]
Since it is still possible that agent~$1$ strongly envies agent~$2$ at this moment, we apply the ``one-by-one assignment'' process in \Cref{prop:1-by-1-reassign}, and derive an allocation~$\alloc'$ satisfying \EFOne, with social welfare no less than $1 + \frac23 y$.

\emph{Subcase 3.2:} $u_2(T_A)\le \frac 13$.
If any one of~$T_A$, $T_B$, and~$T_C$ is assigned to agent~$2$, she would not strongly envy agent~$1$.
Assume she gets~$T_C$.
For any item~$g\in T_B$, $u_2(T_C)\geq u_2(T_B\setminus \{g\})$ should hold, because moving~$g$ to~$T_C$ would give a more even partition otherwise.
Furthermore, $u_2(T_2) \geq y \geq \frac13 \geq u_2(T_A)$.
Thus, $u_2(T_C \cup T_2) \geq u_2(T_A \cup T_B \setminus \{g\})$ for any $g\in T_B$, proving the claim that agent~$2$ does not strongly envy agent~$1$.
If~$T_A$ or~$T_B$ is assigned to agent~$2$ instead, the proof is similar.
Again, assigning the subset with smallest surplus to agent~$2$ would result in an allocation with social welfare no less than $1+\frac23 y$.
Since it is possible that agent~$1$ strongly envies agent~$2$ at this moment, we apply the ``one-by-one assignment'' process in \Cref{prop:1-by-1-reassign}, and derive an allocation~$\alloc'$ satisfying \EFOne, with social welfare no less than $1 + \frac23 y$.

\emph{Subcase 3.3:} $u_2(T_A)\ge \frac 13$ and $u_2(T_C)\le \frac 16$.
In this case, all items in~$T_A$ must have values no less than~$\frac 16$, for otherwise moving this item to~$T_C$ would bring about a more even allocation.
Furthermore, observe that if there are two items in~$T_A$, moving one of them to~$T_C$ would also result in a more even allocation.
Thus, there can only be one item in~$T_A$.
Call it~$g_A$.
The optimal allocation here already satisfies \EFOne, because $u_2(T_2) \geq y = \frac 13$, and $u_2(T_1 \setminus \{g_A\}) = u_2(T_1) - u_2(g_A) \leq \frac23 - \frac13 = \frac{1}{3}$.

\smallskip

Concluding the three subcases discussed above, for any instance~$I$ with $\frac13 \leq y \leq \frac12$, there exists an \EFOne allocation~$\alloc$ with $\SW(\alloc)\ge 1 + \frac23 y$. Therefore, the price of \EFOne for instance~$I$ is
\[
\frac{\OPT(I)}{\SW(\alloc)} \leq \frac{1+y}{1+\frac 23 y} \leq \frac 98.
\]

\medskip

Combining all three cases, the price of \EFOne is at most~$\frac{8}{7}$.
Together with the lower bound provided in \citet{BeiLuMa21}, we concluded that for two agents, the price of \EFOne is exactly~$\frac{8}{7}$.
\end{proof}

\subsection{Price of \EFX\ / \EFM \ / \EFXM}

\begin{algorithm}[t]
\caption{Cut-and-Choose Algorithm}
\label{alg:cut-and-choose}
\DontPrintSemicolon

\KwIn{Fair division instance~$\langle [2], \indivisibleGoods, D, \val \rangle$.}
\KwOut{An \EFM (or \EFX if $D = \emptyset$) allocation with social welfare at least $\frac{\utility{1}{\indivisibleGoods \cup D} + \utility{2}{\indivisibleGoods \cup D}}{2}$.}

Let agent~$1$ (resp., agent~$2$) partition the mixed goods into two bundles, denoted by $X_1, X_2$ (resp., $Y_1, Y_2$), in the sense that her values for the two bundles are as equal as possible. Assume without loss of generality that $\utility{1}{X_1} \geq \utility{1}{X_2}$, $|\utility{1}{X_1} - \utility{1}{X_2}| \leq |\utility{2}{Y_1} - \utility{2}{Y_2}|$ and that between bundles~$X_1$ and~$X_2$, all goods of value zero for agent~$1$, if any, are in bundle~$X_2$.\;

Let agent~$2$ choose her preferred bundle between~$X_1$ and~$X_2$, and agent~$1$ get the other bundle. Denote by~$\alloc = (\bundle_1, \bundle_2)$ the resulting allocation.\;

\Return{Allocation~$\alloc$}
\end{algorithm}

Regarding \EFX, it is known that for scaled utilities, the price of \EFX is~$\frac{3}{2}$ due to \citet{BeiLuMa21} as well as for unscaled utilities, the price of \EFX is \emph{at least}~$2$ due to \citet{BuLiLi22}.
We provide here a matching upper bound and thus conclude that for unscaled utilities, the price of \EFX is exactly~$2$.
In addition, we provide a complete picture of tight bounds on the price of \EFM and \EFXM for two agents with scaled or unscaled utilities.

We start by showing that a variant of the well-known \emph{Cut-and-Choose Algorithm} outputs an \EFXM (and thus \EFM) allocation with social welfare at least one half of $\utility{1}{\indivisibleGoods \cup D} + \utility{2}{\indivisibleGoods \cup D}$.
The same idea has also been used to show the price of \EFX for two agents when allocating indivisible goods; see Theorem~3.4 of \citet{BeiLuMa21}.
We slightly tailor the algorithm description to allocating mixed goods.

\begin{lemma}
\label{lem:cut-and-choose}
Given any fair division instance~$\langle [2], \indivisibleGoods, D, \val \rangle$, \Cref{alg:cut-and-choose} computes an \EFXM allocation~$\alloc = (\bundle_1, \bundle_2)$ with social welfare $\SW(\alloc) \geq \frac{\utility{1}{\indivisibleGoods \cup D} + \utility{2}{\indivisibleGoods \cup D}}{2}$.
If $D = \emptyset$, $\alloc$ is \EFX.
\end{lemma}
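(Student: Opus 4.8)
The plan is to split the statement into its two independent assertions---that $\alloc$ is \EFXM (and \EFX when $D=\emptyset$), and that $\SW(\alloc)\ge\frac{u_1(\indivisibleGoods\cup D)+u_2(\indivisibleGoods\cup D)}{2}$---and to lean throughout on the two structural features built into Step~1 of \Cref{alg:cut-and-choose}: agent~$1$'s cut is \emph{as even as possible} in her own valuation, and every good worth zero to agent~$1$ is placed in $X_2$. I would write $U_i=u_i(\indivisibleGoods\cup D)$ and introduce three imbalances, $\delta_1=|u_1(X_1)-u_1(X_2)|$, $\delta_2=|u_2(Y_1)-u_2(Y_2)|$, and $\delta_2'=|u_2(X_1)-u_2(X_2)|$, namely the imbalance of agent~$1$'s cut under $u_1$, of agent~$2$'s cut under $u_2$, and of agent~$1$'s cut under $u_2$.

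For the fairness half I would first dispose of agent~$2$: since she chooses her preferred of $X_1,X_2$ in Step~2, $u_2(\bundle_2)\ge u_2(\bundle_1)$, so she is fully envy-free and her \EFXM (indeed envy-freeness) condition holds regardless of what $\bundle_1$ contains. The real work is agent~$1$'s condition towards $\bundle_2$. If agent~$2$ took $X_2$ then agent~$1$ keeps $X_1$ and $u_1(X_1)\ge u_1(X_2)$ gives envy-freeness directly, so the remaining case is that agent~$2$ took $X_1$ and agent~$1$ holds $X_2$. Here I would split on the divisible content of $\bundle_2=X_1$. If $X_1$ carries a positive fraction of some divisible good $d$, the tie-breaking rule forces $u_1(d)>0$ (zero-valued goods live in $X_2$), so were $u_1(X_1)>u_1(X_2)$ agent~$1$ could shift an infinitesimal slice of $d$ from $X_1$ to $X_2$ and strictly shrink the gap, contradicting as-even-as-possible; hence $u_1(X_1)=u_1(X_2)$ and agent~$1$ is envy-free, exactly the ``$\x_j\neq\mathbf{0}$'' branch of \EFXM. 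If instead $X_1$ is purely indivisible, I must verify $u_1(\bundle_1)\ge u_1(X_1\setminus\{g\})$ for every $g\in X_1$, i.e.\ $u_1(g)\ge u_1(X_1)-u_1(X_2)$: if some $g$ violated this, moving $g$ from $X_1$ to $X_2$ would again strictly reduce $|u_1(X_1)-u_1(X_2)|$ (the tie-break guarantees $u_1(g)>0$, excluding the degenerate no-improvement move), contradicting minimality. This settles \EFXM; when $D=\emptyset$ every bundle is purely indivisible and the same argument yields \EFX.

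For social welfare I would bound the two shares separately. As the chooser, agent~$2$ receives the larger piece under $u_2$, so $u_2(\bundle_2)=\frac{U_2+\delta_2'}{2}$; agent~$1$ receives at worst her smaller piece, so $u_1(\bundle_1)\ge\frac{U_1-\delta_1}{2}$. Adding gives $\SW(\alloc)\ge\frac{U_1+U_2}{2}+\frac{\delta_2'-\delta_1}{2}$, so it remains to show $\delta_2'\ge\delta_1$. This is where the ordering fixed in Step~1 pays off: the assumption $\delta_1\le\delta_2$ says agent~$1$ is the more balanced cutter, while the optimality of agent~$2$'s own cut over all partitions---in particular over $(X_1,X_2)$---gives $\delta_2\le\delta_2'$; chaining yields $\delta_1\le\delta_2\le\delta_2'$ and hence the bound.

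The main obstacle I anticipate is the agent~$1$ fairness direction when the chooser took $X_1$: one has to argue case-by-case on whether $\bundle_2$ carries divisible mass, and the two local-exchange arguments (shifting a divisible sliver; swapping out a single indivisible good) must be set up so that ``as even as possible'' is genuinely contradicted---which is precisely why the zero-value tie-breaking rule is needed, as it rules out the borderline moves that would leave the imbalance unchanged. By comparison, the social-welfare half is routine once the chain $\delta_1\le\delta_2\le\delta_2'$ is in place; the only non-obvious ingredient there is recognizing that the assumption $\delta_1\le\delta_2$ combined with the minimality of agent~$2$'s cut is exactly what forces $\delta_2'\ge\delta_1$.
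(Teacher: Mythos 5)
Your proof is correct and follows essentially the same route as the paper's: agent~$2$ is envy-free as the chooser, agent~$1$'s \EFXM condition follows from the ``as even as possible'' cut via a local-exchange contradiction, and the welfare bound combines agent~$1$'s smaller piece with agent~$2$'s larger piece using the chain $\delta_1\le\delta_2\le\delta_2'$ (the paper phrases this as $u_1(\bundle_1)+u_2(\bundle_2)\ge u_1(X_2)+u_2(Y_1)\ge u_1(X_1)+u_2(Y_2)$ and averages). Your treatment of the zero-value tie-breaking rule is in fact more explicit than the paper's, which glosses over why the borderline no-improvement moves are excluded.
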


\begin{proof}
For ease of exposition, we assume without loss of generality that $\utility{1}{X_1} \geq \utility{1}{X_2}$ and $\utility{2}{Y_1} \geq \utility{2}{Y_2}$.
Then, according to \Cref{alg:cut-and-choose}, we have
\begin{equation}
\label{eq:utility-difference}
\utility{1}{X_1} - \utility{1}{X_2} \leq \utility{2}{Y_1} - \utility{2}{Y_2}.
\end{equation}
Put differently, agent~$1$'s partition of the mixed goods is more equal.

We now show that $\SW(\alloc) \geq \frac{\utility{1}{\indivisibleGoods \cup D} + \utility{2}{\indivisibleGoods \cup D}}{2}$.
First, we have $\utility{1}{\bundle_1} \geq \utility{1}{X_2}$.
Second, we have $\utility{2}{\bundle_2} \geq \utility{2}{Y_1}$; otherwise, agent~$2$ could have a more equal partition of the mixed goods, a contradiction to our assumption.
The social welfare of allocation~$\alloc$ is lower bounded by
\[
\SW(\alloc) = \utility{1}{\bundle_1} + \utility{2}{\bundle_2} \geq \utility{1}{X_2} + \utility{2}{Y_1} \geq \utility{1}{X_1} + \utility{2}{Y_2},
\]
where the last transition is due to \Cref{eq:utility-difference}.
It implies that
\begin{align*}
\SW(\alloc) & \geq \frac{\utility{1}{X_2} + \utility{2}{Y_1} + \utility{1}{X_1} + \utility{2}{Y_2}}{2} \\
& = \frac{\utility{1}{\indivisibleGoods \cup D} + \utility{2}{\indivisibleGoods \cup D}}{2},
\end{align*}
as desired.

Finally, we show that allocation~$\alloc$ is \EFXM.
Agent~$2$ gets her preferred bundle, so she is envy-free and hence \EFXM.
Regarding agent~$1$, she is envy-free (and hence \EFM) if she receives bundle~$X_1$.
In the case that agent~$1$ gets bundle~$X_2$, if agent~$1$ still has envy after removing some indivisible good or some amount of divisible goods from bundle~$X_1$, then, by moving the good to bundle~$X_2$, agent~$1$ could have created a more equal partition, a contradiction.
As a result, allocation~$\alloc$ is \EFXM.
When $D = \emptyset$, this implies that allocation~$\alloc$ is \EFX.
\end{proof}

We are now ready to show the tight bounds on price of \EFX\ / \EFM \ / \EFXM for two agents, and start with the case of agents having unscaled utilities.

\begin{theorem}
\label{thm:unscaled-2-agents}
For $n = 2$ and unscaled utilities,
the price of \EFX \ / \EFM \ / \EFXM is $2$.
\end{theorem}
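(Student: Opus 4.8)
The plan is to get the upper bound essentially for free from \Cref{lem:cut-and-choose} and to match it with a simple lower-bound instance, arguing all three notions at once by exploiting how they relate to one another. First I would record the trivial universal bound $\OPT(I) \leq \utility{1}{\indivisibleGoods \cup D} + \utility{2}{\indivisibleGoods \cup D}$: in any allocation each (in)divisible good contributes at most $\max\{\utility{1}{g}, \utility{2}{g}\} \leq \utility{1}{g} + \utility{2}{g}$ to the social welfare by non-negativity, and summing over all goods gives the claim. By \Cref{lem:cut-and-choose}, the Cut-and-Choose output is an \EFXM allocation (hence also \EFM, and \EFX when $D = \emptyset$) with social welfare at least $\frac{\utility{1}{\indivisibleGoods \cup D} + \utility{2}{\indivisibleGoods \cup D}}{2}$. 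Therefore the maximum-welfare \EFXM allocation has at least this welfare, so the price of \EFXM is at most $\frac{\utility{1}{\indivisibleGoods \cup D} + \utility{2}{\indivisibleGoods \cup D}}{\frac12\left(\utility{1}{\indivisibleGoods \cup D} + \utility{2}{\indivisibleGoods \cup D}\right)} = 2$. Since the very same allocation is also \EFM, the best \EFM allocation is weakly better, giving price of \EFM at most $2$ as well; the \EFX case with $D = \emptyset$ is identical.

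For the lower bound I would observe the key structural collapse: when $D = \emptyset$ every agent satisfies $\x_j = \mathbf{0}$, so the \EFM criterion reduces to \EFOne and the \EFXM criterion reduces to \EFX. It thus suffices to exhibit an indivisible-goods instance whose price of \EFX (respectively \EFOne, for the \EFM claim) approaches $2$. A clean choice is two goods $g_1, g_2$ with $\utility{1}{g_1} = \utility{1}{g_2} = 1$ and $\utility{2}{g_1} = \utility{2}{g_2} = \varepsilon$ for small $\varepsilon > 0$. Assigning both goods to agent~$1$ yields $\OPT = 2$, but no \EFX (nor \EFOne) allocation, even partial, can give both goods to agent~$1$: after removing either good, agent~$1$ still holds a good worth $\varepsilon > 0$ to agent~$2$, who holds nothing, so agent~$2$ strongly envies. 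The best fair welfare is achieved by the split giving one good to each agent, namely $1 + \varepsilon$, so the ratio is $\frac{2}{1+\varepsilon} \to 2$. This single instance simultaneously certifies the lower bound for \EFX, \EFM, and \EFXM; alternatively one may simply invoke the $\geq 2$ bounds already established by \citet{BuLiLi22}.

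The genuinely new content is the upper bound, and it is almost handed to us by \Cref{lem:cut-and-choose}; the one place to be careful is the direction of monotonicity between the notions—a stronger fairness notion admits a smaller set of feasible allocations and hence a weakly larger price—so that the single Cut-and-Choose guarantee can discharge both \EFM and \EFXM through the implications $\EFXM \Rightarrow \EFM$. The lower-bound side is routine once one notices the collapse of \EFM and \EFXM to \EFOne and \EFX in the absence of divisible goods; the only point requiring a line of verification is that no \emph{partial} allocation beats $1 + \varepsilon$, which here is immediate since any allocation handing both goods to agent~$1$ is excluded by strong envy. I do not anticipate a substantive obstacle beyond stating these reductions precisely.
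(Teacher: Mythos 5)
Your proof is correct and the upper bound---the only substantive content of this theorem---is exactly the paper's argument: combine the trivial bound $\OPT(I) \leq \utility{1}{\indivisibleGoods \cup D} + \utility{2}{\indivisibleGoods \cup D}$ with the welfare guarantee of \Cref{lem:cut-and-choose}, and note that the single \EFXM allocation it produces simultaneously certifies the bound for \EFM and (when $D = \emptyset$) \EFX. For the lower bound the paper simply cites Theorem~F.4 of \citet{BuLiLi22} (which you also offer as an alternative), whereas you give a self-contained two-good instance; your instance checks out, including the observation that with $D=\emptyset$ the notions collapse to \EFOne/\EFX and that no partial allocation can do better than $1+\varepsilon$, so this is a harmless and slightly more self-contained variant of the same proof.
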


\begin{proof}
The lower bound of~$2$ for both the price of \EFX (and thus \EFXM) and the price of \EFM (note that \EFM generalizes \EFOne) follows from Theorem~F.4 of \citet{BuLiLi22}.

We now show the matching upper bound.
Consider an arbitrary instance $I = \langle [2], \indivisibleGoods, D, \val \rangle$.
It is easy to see that $\OPT(I) \leq \utility{1}{\indivisibleGoods \cup D} + \utility{2}{\indivisibleGoods \cup D}$.
Together with \Cref{lem:cut-and-choose}, we conclude that the price of these three fairness notions is at most~$2$.
\end{proof}

Our next result is the price of \EFM and the price of \EFXM for two agents with scaled utilities.

\begin{theorem}
\label{thm:scaled-2-agents}
For $n = 2$ and scaled utilities, the price of \EFM and the price of \EFXM is~$\frac{3}{2}$.
\end{theorem}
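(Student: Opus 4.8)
The plan is to prove the upper bound of $\frac32$ via the Cut-and-Choose algorithm (\Cref{alg:cut-and-choose}), and to establish the matching lower bound of $\frac32$ by constructing an explicit family of instances. For the upper bound, observe that by \Cref{lem:cut-and-choose} the algorithm outputs an \EFXM (hence \EFM) allocation $\alloc$ with $\SW(\alloc) \geq \frac{\utility{1}{\indivisibleGoods \cup D} + \utility{2}{\indivisibleGoods \cup D}}{2}$. With \emph{scaled} utilities, $\utility{1}{\indivisibleGoods \cup D} = \utility{2}{\indivisibleGoods \cup D} = 1$, so this gives $\SW(\alloc) \geq 1$, whereas the naive bound $\OPT(I) \leq 2$ would only yield a ratio of $2$, not $\frac32$. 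So the crude argument from \Cref{thm:unscaled-2-agents} is insufficient, and I will need a sharper analysis that exploits scaling.

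\textbf{Sharpening the upper bound.}
The key will be a case analysis on how lopsided the optimal allocation is. Write $\OPT(I) = \utility{1}{\bundle_1^*} + \utility{2}{\bundle_2^*}$. When $\OPT(I)$ is large (close to $2$), at least one agent must be receiving almost all the value under the optimal split, and scaling forces the \emph{other} agent's optimal bundle to have small value to \emph{both} agents; I would argue that in this regime the Cut-and-Choose allocation actually does better than the worst-case $\SW(\alloc) \geq 1$ guarantee, because agent~$1$'s most-equal partition cannot be too unequal. Concretely, I expect to parametrize by $\utility{1}{X_1} - \utility{1}{X_2} =: \delta \geq 0$; since the partition is agent~$1$'s most equal split, $\utility{1}{X_2} = \frac{1-\delta}{2}$ while agent~$2$ keeps her preferred bundle worth at least $\frac12$. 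I would then bound $\OPT(I)$ from above in terms of $\delta$ and the cross-utilities, and optimize the resulting ratio $\OPT/\SW(\alloc)$ over $\delta$, showing it never exceeds $\frac32$. The delicate point is that a single most-equal partition need not certify a good social-welfare lower bound when one indivisible good is very valuable; I anticipate the worst case occurs at a boundary where one large indivisible good sits opposite a divisible (or small-good) remainder, so the algorithm's tie-breaking rule (pushing zero-value goods into $X_2$) matters.

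\textbf{Lower bound construction.}
For the matching lower bound I would build a two-agent scaled instance whose every \EFM allocation has social welfare at most $\frac23 \OPT(I) + o(1)$. The natural candidate mirrors the \EFX lower bound of $\frac32$ from \citet{BeiLuMa21}: take a small number of indivisible goods (and possibly one divisible good) with utilities arranged so that the welfare-optimal allocation gives each valuable good to the agent who prizes it, achieving $\OPT$ near $\frac32$ of what any fair split can get, while \EFM (being as restrictive as \EFX here, per the paper's stated observation that the two coincide) forbids that split. Since \EFXM strengthens \EFM, the same instance gives the lower bound for both notions, and the upper bound for \EFXM follows a fortiori from the \EFM upper bound because \EFXM allocations are a subset of—wait, \EFXM is \emph{stronger}, so its best allocation is weakly worse; thus I must verify the Cut-and-Choose output is genuinely \EFXM (which \Cref{lem:cut-and-choose} already guarantees) so that the \emph{upper} bound transfers to \EFXM, and present the lower bound instance so that no \EFM allocation (the weaker notion, hence the harder direction for a lower bound) beats $\frac32$.

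\textbf{Main obstacle.} The hard part will be the refined upper-bound case analysis: showing $\SW(\alloc) \geq \frac23 \OPT(I)$ rather than merely $\SW(\alloc) \geq \frac12 \cdot 2$. I expect to split on whether agent~$1$'s most-equal partition value $\utility{1}{X_2}$ exceeds a threshold like $\frac13$, handling the balanced case directly and, in the unbalanced case, using scaling together with the structure of the \EFXM guarantee to bound $\OPT$ away from $2$.
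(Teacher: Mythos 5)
Your high-level plan (Cut-and-Choose for the upper bound, an explicit instance for the lower bound) matches the paper's, but both halves have gaps as sketched. For the upper bound, the paper does not need your refined parametrization by $\delta$ at all: it uses a much simpler dichotomy on the \emph{optimal} allocation rather than on the Cut-and-Choose output. If the optimal allocation gives both agents utility at least $\frac12$, then by scaled additive utilities it is already envy-free (hence \EFXM), and the price for that instance is $1$. Otherwise some agent gets utility less than $\frac12$ in the optimal allocation, so $\OPT(I) \leq 1 + \frac12 = \frac32$, and \Cref{lem:cut-and-choose} with scaling gives an \EFXM allocation of welfare at least $1$, yielding ratio at most $\frac32$. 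Your route instead tries to prove $\SW(\alloc) \geq \frac23\OPT(I)$ for the Cut-and-Choose allocation itself in all cases; besides being unnecessary (the price of fairness compares against the \emph{best} fair allocation, so you may use the optimal allocation itself as the witness when it happens to be fair), the proposed optimization over $\delta$ and ``cross-utilities'' is not carried out and is considerably more delicate than the two-line argument above.

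The lower bound is the more serious gap. You propose to ``mirror the \EFX lower bound of $\frac32$'' with indivisible goods and only ``possibly one divisible good.'' But on instances with $D = \emptyset$, \EFM degenerates to \EFOne, and by \Cref{thm:indivisible-2-agents} the two-agent scaled price of \EFOne is only $\frac87 < \frac32$; so no purely indivisible instance can certify a $\frac32$ lower bound for \EFM. The divisible goods are not optional: they are what forces the strict envy-freeness clause of the \EFM definition to apply. The paper's construction uses one indivisible good $g_1$ worth $\frac12$ to both agents and two divisible goods $d_1, d_2$ with values $(\frac12-\varepsilon, \varepsilon)$ and $(\varepsilon, \frac12-\varepsilon)$ respectively; then $\OPT = \frac32 - 2\varepsilon$, while in any \EFM allocation no agent can hold $g_1$ together with a positive amount of divisible good (the other agent would then have to be exactly envy-free, which is impossible here), capping the welfare of every \EFM allocation at $1$. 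Without an instance of this type, your lower bound does not go through.
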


\begin{proof}
\emph{Lower bound:}
Consider the following instance with one indivisible good~$g_1$ and two homogeneous divisible goods~$d_1, d_2$, and assume that the utilities are as follows:
\begin{center}
\begin{tabular}{@{}l|ccc@{}}
\toprule
& $g_1$ & $d_1$ & $d_2$ \\
\midrule
Agent~$1$'s value & $1/2$ & $1/2 - \varepsilon$ & $\varepsilon$ \\
Agent~$2$'s value & $1/2$ & $\varepsilon$ & $1/2 - \varepsilon$ \\
\bottomrule
\end{tabular}
\end{center}
The optimal social welfare is~$3/2 - 2 \varepsilon$, achieved by assigning goods~$g_1$ and~$d_1$ to agent~$1$, and good~$d_2$ to agent~$2$.
On the other hand, in any \EFM allocation, no agent can get both the indivisible good~$g_1$ and any positive amount of the divisible goods.
Hence, the social welfare of an \EFM allocation is at most~$1$.
Taking~$\varepsilon \to 0$, we find that the price of \EFM is at least~$3/2$.
Since \EFXM is stronger than \EFM, this also implies that the price of \EFXM is at least~$3/2$.

\medskip
\noindent
\emph{Upper bound:}
Consider an arbitrary instance.
If in an optimal allocation both agents get utility at least~$1/2$, this allocation is envy-free (due to the assumptions of additive and scaled utilities) and hence \EFM and \EFXM; therefore, in this case, the price of \EFM is~$1$.
Otherwise, the maximum social welfare is at most $1 + 1/2 = 3/2$.
According to \Cref{lem:cut-and-choose}, \Cref{alg:cut-and-choose} returns an \EFXM (and thus \EFM) allocation~$\alloc$ with $\SW(\alloc) \geq \frac{\utility{1}{\indivisibleGoods \cup D} + \utility{2}{\indivisibleGoods \cup D}}{2}$.
Since utilities are scaled, we have $\SW(\alloc) \geq 1$, implying that the price of \EFXM and the price of \EFM is at most~$3/2$.
\end{proof}

\section{Arbitrary Number of Agents}

In this section, we establish asymptotically tight bounds on the price of \EFM and the price of \EFXM for $n$ agents, and begin with the case that agents' valuations are scaled.

\begin{theorem}
\label{thm:scaled-n-agents}
For scaled utilities, the price of \EFM and the price of \EFXM are~$\Theta(\sqrt{n})$.
\end{theorem}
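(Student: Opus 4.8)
The plan is to establish the upper bound and the lower bound separately. For the upper bound of $O(\sqrt{n})$, the natural strategy is to leverage the known result that the price of EFX with scaled utilities is $\Theta(\sqrt{n})$ \citep{BuLiLi22}, together with a reduction that handles the divisible goods. Since EFM is implied by EFXM, and both are weaker than full envy-freeness, it suffices to exhibit a single EFM (or EFXM) allocation whose social welfare is within an $O(\sqrt{n})$ factor of the optimum. First I would observe that if we simply allocate the entire divisible pool by treating each divisible good as if it had to go to one agent's ``virtual'' bundle, we can reduce to the indivisible case; more carefully, one can take a good EFX allocation of the indivisible goods guaranteed by the $O(\sqrt{n})$ bound, and then allocate the divisible goods so as not to destroy the EFM guarantee (e.g., by an envy-free divisible allocation or by giving divisible mass only to agents who are currently envied least). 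Because divisible goods admit exact envy-free splits, the divisible part can be handled without welfare loss beyond a constant factor, so the $O(\sqrt n)$ bound should carry over.

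For the lower bound $\Omega(\sqrt{n})$, I would reuse (or adapt) the hard instance already known to force price of EFX $= \Omega(\sqrt{n})$ in the pure indivisible setting. Since EFM restricted to instances with $D = \emptyset$ coincides with EF1 on bundles of indivisible goods, and the EFX lower-bound instances use only indivisible goods, the same family of instances should give the $\Omega(\sqrt n)$ lower bound for EFM and EFXM directly: on an all-indivisible instance, any EFM (resp.\ EFXM) allocation must be EF1 (resp.\ EFX), so the worst-case welfare ratio is at least that of EF1 / EFX. The key is to verify that the known scaled-utility construction achieving $\Omega(\sqrt n)$ for EF1 / EFX indeed applies, and that restricting to $D = \emptyset$ is legitimate within our price-of-fairness definition.

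The main obstacle I anticipate is the upper-bound direction, specifically ensuring that introducing divisible goods does not worsen the ratio beyond $O(\sqrt n)$. The subtlety is that EFM treats an agent differently depending on whether her bundle contains divisible mass: an agent holding any divisible good must be envied under the strict EF criterion, not EF1. So one cannot freely sprinkle divisible goods onto an EFX allocation of the indivisibles without risking large envy. The careful step is to allocate divisible goods only in a way that preserves the fairness guarantee---for instance, allocating the divisible goods via an envy-free division among all agents (which always exists for divisible goods) and arguing that combining this with the EFX allocation of indivisibles stays EFM while the welfare remains a constant fraction of optimum. I expect the bulk of the work to lie in this combination argument and in bounding the welfare of the combined allocation against $\OPT$, using scaledness to control each agent's total value.
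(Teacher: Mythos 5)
Your lower-bound argument is fine and is exactly what the paper does: \EFM generalizes \EFOne and \EFXM implies \EFM, so the $\Omega(\sqrt{n})$ instances for indivisible goods (with $D=\emptyset$) transfer immediately.

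The upper bound, however, has a genuine gap at precisely the step you flagged as the ``careful step.'' Your proposed construction---take an \EFX allocation of the indivisible goods with welfare within $O(\sqrt{n})$ of optimal, then superimpose an envy-free division of the divisible goods---does not produce an \EFM allocation, and no choice of envy-free cake division rescues it. The problem is that \EFX permits agent~$i$ to strictly envy agent~$j$'s indivisible bundle (as long as removing any one good eliminates the envy), but the moment $j$ receives \emph{any} positive amount of divisible goods, \EFM demands that $i$ not envy $j$ at all. Adding divisible mass to $j$'s bundle can only increase $i$'s value for it, so the residual envy tolerated by \EFX becomes a hard violation of \EFM. This is exactly why the existence proof of \EFM in \citet{BeiLiLi21} is nontrivial and only gives cake to carefully chosen non-envied agents; but that procedure offers no control over welfare, so ``give divisible mass only to the least-envied agents'' does not obviously preserve the $O(\sqrt{n})$ ratio either. (A secondary inaccuracy: an envy-free split of the divisible goods can itself cost $\Theta(\sqrt{n})$ in welfare, not a constant factor, though that alone would not break the target bound.) The paper avoids the combination problem entirely with a discretization-and-limit argument: each divisible good is cut into $\ell$ equal indivisible pieces, the $O(\sqrt{n})$ price-of-\EFX result of \citet{BuLiLi22} is applied to each discretized instance $I^{\ell}$, and a limit point of the resulting allocation sequence is extracted via Bolzano--Weierstrass. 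In the limit, the \EFX guarantee on arbitrarily small pieces forces \emph{exact} envy-freeness toward any bundle containing positive divisible mass, which is precisely the \EFXM condition, while continuity of $\SW$ and the identity $\OPT(I)=\OPT(I^{\ell})$ preserve the approximation ratio. To repair your proof you would need either this limiting argument or a direct algorithm that simultaneously controls welfare and ensures no agent holding divisible goods is envied at all.
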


\begin{proof}
Since \EFM generalizes \EFOne and \EFXM implies \EFM, the lower bound $\Omega(\sqrt{n})$ follows from \citet{BeiLuMa21}.

To show the upper bound $O(\sqrt{n})$, we make use of the result that the price of \EFX is $\Theta(\sqrt{n})$ shown by \citet{BuLiLi22}.
The high-level idea is as follows.
We first split each divisible good $d_{\kbar}$ into $\ell$ smaller goods $d_{\kbar}^1,d_{\kbar}^2,\ldots,d_{\kbar}^\ell$ of equal size, and we treat each of the $\ell$ smaller goods as an indivisible good.
In other words, we are considering an instance $I^\ell$ with a total of $m+\mbar\ell$ indivisible goods.
For each $\ell$, we find an \EFX allocation that achieves an $O(\sqrt{n})$-approximation to $\OPT(I^\ell)$.
When $\ell\rightarrow\infty$, we have a sequence of EFX allocations which converges to a ``limit allocation\rq{}\rq{} that is \EFXM (and thus \EFM), and the limit allocation exists due to the compactness of the allocation space.
This limit allocation characterizes the upper bound $O(\sqrt{n})$ for the price of \EFXM, as the social welfare is a continuous function on the allocation space.

To prove the upper bound formally, we start from defining the instance $I^\ell$ and the allocation $\alloc^\ell$.
In the instance $I^\ell$, we have the same set of agents $N$, and a set of $m+\mbar\ell$ \emph{indivisible} goods which consist of the $m$ goods in $M$ and $\mbar\ell$ goods $\{d_{\kbar}^1,d_{\kbar}^2,\ldots,d_{\kbar}^\ell\}_{\kbar=1,\ldots,\mbar}$ as described earlier.
The result of \citet{BuLiLi22} indicates that there exists a (partial) EFX allocation $\alloc$ for instance $I^\ell$ such that $\frac{\OPT(I^\ell)}{\SW(\alloc)}\leq c\sqrt{n}$ for some constant $c$ and sufficiently large $n$.
Let $\alloc^\ell$ be such an allocation.
Notice that $A^\ell$ is also a valid allocation for the original instance $I$ (where each $d_{\kbar}$ is divisible), and we will use $\alloc^\ell$ for the same allocation in both $I^\ell$ and $I$.

Next, we will define an allocation~$\alloc$ for the original instance $I$ which is a ``limit allocation\rq{}\rq{} for the allocation sequence $\{\alloc^\ell\}_{\ell=1}^\infty$.
To make the notion of limit valid, we need to define a metric space for the set of all allocations, and this is defined in the following natural way.
First note that there are $(n+1)^m$ ways to allocate the indivisible goods (each good can be allocated to one of the $n$ agents, or unallocated), which is finite.
For each fixed allocation of the indivisible goods, an allocation of the divisible goods $\{d_1,d_2,\ldots,d_{\mbar}\}$ can be naturally described by a point in the following subset of $\mathbb{R}^{n\mbar}$:
\begin{multline*}
\chi=\Biggl\{(x_{i\kbar})_{i=1,\ldots,n;\kbar=1,\ldots,\mbar}\in\mathbb{R}^{n\mbar}:\sum_{i=1}^nx_{i\kbar}\leq1\mbox{ for each }\kbar\in[\mbar], \\ \mbox{and } x_{i\kbar}\geq 0\mbox{ for each }i\in[n]\mbox{ and }\kbar\in[\mbar]\Biggr\}.
\end{multline*}
Given two allocations, the distance between them in the metric space is defined as follows:
\begin{itemize}
\item if their corresponding allocations for $M$ are different, the distance is $\infty$;
\item if their corresponding allocations for $M$ are the same, the distance is defined by the Euclidean distance of the two points in $\chi$ describing their allocations for $D$.
\end{itemize}
Since $\chi$ is closed and bounded and the space of all allocations is a union of finitely many ($(n+1)^m$ to be precise) such closed and bounded sets, the Bolzano-Weierstrauss Theorem~\citep{bartle2000introduction} implies that the allocation space contains at least one allocation that is a limit point for the sequence $\{\alloc^\ell\}_{\ell=1}^\infty$.
Let $\alloc$ be one such limit point.
In the remaining part of the proof, we will conclude the theorem by showing that 1) $\alloc$ is an \EFXM allocation and 2) it satisfies the approximation guarantee $\frac{\OPT(I)}{\SW(\alloc)}=O(\sqrt{n})$.

\paragraph{$\alloc$ is \EFXM}
Suppose this is not the case. There exist two agents $i$ and $j$ such that $u_i(\bundle_i)<u_i(\bundle_j)$ and $\bundle_j$ contains some divisible good (i.e., $x_{j\kbar}>0$ for some $\kbar$).
We choose a sufficiently small value $\delta_{\kbar}$ such that $3\delta_{\kbar}\in(0,x_{j\kbar})$ and $u_i(\bundle_j)-u_i(\bundle_i)>3\delta_{\kbar}\cdot u_i(d_{\kbar})$.
In other words, removing an amount $3\delta_{\kbar}$ of good $d_{\kbar}$ from $\bundle_j$ will not stop agent $i$ from envying agent $j$.
Since $u_i$ is a continuous function (which can be proved by a straightforward application of the definition of continuity given our definition of the metric space) and $\alloc$ is a limit point of the sequence $\{\alloc^\ell\}_{\ell=1}^\infty$, by considering a sufficiently small neighbourhood of $\alloc$, there exists $\ell$ with allocation $\alloc^\ell=(\bundle_1^\ell,\ldots,\bundle_n^\ell)$ such that
\begin{enumerate}
\item $|u_i(\bundle_i)-u_i(\bundle_i^\ell)|<\delta_{\kbar}\cdot u_i(d_{\kbar})$,
\item $|u_i(\bundle_j)-u_i(\bundle_j^\ell)|<\delta_{\kbar}\cdot u_i(d_{\kbar})$, and
\item $\delta_{\kbar}>\frac1\ell$.
\end{enumerate}
Points 1 and 2 above imply $u_i(\bundle_j^\ell)-u_i(\bundle_i^\ell)>\delta_{\kbar}\cdot u_i(d_{\kbar})$ under the condition that $u_i(\bundle_j)-u_i(\bundle_i)>3\delta_{\kbar}\cdot u_i(d_{\kbar})$.
Point 3 further implies that, in the instance $I^\ell$, there exists an indivisible item corresponding to a small portion that is smaller than $\delta_{\kbar}$ of $d_{\kbar}$ whose removal will not stop agent $i$ from envying agent $j$.
This contradicts to our construction that $\alloc^\ell$ is \EFX.

\paragraph{Approximation guarantee}
By our construction of the sequence with the result of \citet{BuLiLi22}, for sufficiently large $n$ and a fixed constant $c$, we have $\frac{\OPT(I^\ell)}{\SW(\alloc^\ell)}\leq c\sqrt{n}$ for every $\ell$.
It suffices to show that
$$\OPT(I)=\lim_{\ell\rightarrow\infty}\OPT(I^\ell)\qquad\mbox{and}\qquad\SW(\alloc)=\lim_{\ell\rightarrow\infty}\SW(\alloc^\ell).$$
The second limit follows from the continuity of the function $\SW(\cdot)$, where the continuity can be proved by a straightforward application of the definition of continuity.
The first limit follows from the fact that $\OPT(I)=\OPT(I^\ell)$ for each $\ell$.
To see this, in the optimal allocation, we allocate each divisible good $d_{\kbar}$ as a whole to a single agent who values it the highest (with tie broken arbitrarily), so it does not matter how each divisible good is sub-divided to multiple indivisible smaller goods.
\end{proof}

We now proceed to show the price of \EFM and the price of \EFXM for unscaled utilities.

\begin{theorem}
\label{thm:unscaled-n-agents}
For unscaled utilities, the price of \EFM and the price of \EFXM are~$\Theta(n)$.
\end{theorem}

\Cref{thm:unscaled-n-agents} can be proved by using the result that the price of EFX for unscaled valuations is $\Theta(n)$ from~\citet{BuLiLi22} and applying the discretization-with-limit technique used in the proof of \Cref{thm:scaled-n-agents}.
Here, we give an alternative constructive proof of \Cref{thm:unscaled-n-agents}.

When allocating indivisible goods, Lemma~1 of \citet{BarmanBhSh20} proved that there always exists an \EFOne allocation with an absolute welfare guarantee.
We show a similar result holds when allocating mixed goods.
To be more specific, by slightly tweaking Algorithm~1 (\textsc{Alg-\EFOne-Abs}) of \citet{BarmanBhSh20}, we can compute a partial \EFXM allocation with a similar absolute welfare guarantee:

\begin{algorithm}[t]
\caption{A partial \EFXM allocation with an absolute welfare guarantee}
\label{alg:EFMABS}
\DontPrintSemicolon

\KwIn{Fair division instance $\langle N, \indivisibleGoods, D, \val \rangle$.}
\KwOut{A partial \EFXM allocation~$\alloc = (\bundle_1, \bundle_2, \dots, \bundle_n)$ with social welfare $\SW(\alloc) \geq \frac{1}{2n+1} \cdot \sum_{i \in N} \utility{i}{\indivisibleGoods \cup D}$.}

\eIf{$|\indivisibleGoods| \geq n$}{
	$\widetilde{\indivisibleGoods} \gets \indivisibleGoods$
}{
	Let $\widetilde{\indivisibleGoods}$ be the union of indivisible goods $\indivisibleGoods$ and some dummy indivisible goods, for which each agent has value zero, such that $|\widetilde{\indivisibleGoods}| = n$.
}

Consider the weighted bipartite graph $G = (N \cup \widetilde{\indivisibleGoods}, N \times \widetilde{\indivisibleGoods})$ with weight of each edge $(i, g) \in N \times \widetilde{\indivisibleGoods}$ setting as~$\utility{i}{g}$. Let~$\pi$ be a maximum-weight matching in~$G$ that matches all nodes in~$N$.\;

Construct the partial allocation~$\alloc' = (\bundle'_1, \bundle'_2, \dots, \bundle'_n)$ such that $\bundle'_i = \{\pi(i)\}$ for each~$i \in N$.\;

Use Algorithm~2.1 of \citet{ChaudhuryKaMe21} to extend allocation~$\alloc'$ by allocating the remaining indivisible goods and obtain a partial \EFX allocation.\; \label{algextenstart}

Use Algorithm~1 of \citet{BeiLiLi21} to allocate the divisible goods and obtain a partial \EFXM allocation $\alloc = (\bundle_1, \bundle_2, \dots, \bundle_n)$, where $\bundle_i = (\widetilde{\indivisibleGoods}_i, \x_i)$.\; \label{algexten}


\Return{Allocation~$\alloc$}
\end{algorithm}

\begin{lemma}
\label{lem:EFM-with-absolute-welfare}
Given any fair division instance~$\langle N, \indivisibleGoods, D, \val \rangle$, \Cref{alg:EFMABS} computes a partial \EFXM allocation~$\alloc$ with social welfare $\SW(\alloc) \geq \frac{1}{2n+1} \cdot \sum_{i \in N} \utility{i}{\indivisibleGoods \cup D}$.
\end{lemma}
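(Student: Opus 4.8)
The plan is to establish the claimed absolute welfare guarantee by tracking how much social welfare is preserved through each stage of Algorithm~\ref{alg:EFMABS}, and to verify that the final allocation is indeed \EFXM. The correctness of the fairness property is essentially inherited from the subroutines: Step~\ref{algextenstart} invokes Algorithm~2.1 of \citet{ChaudhuryKaMe21} to produce a partial \EFX allocation of the indivisible goods, and Step~\ref{algexten} invokes Algorithm~1 of \citet{BeiLiLi21} to allocate the divisible goods so as to maintain the mixed-goods envy-freeness guarantee. The main thing to check here is that extending an \EFX allocation of indivisibles via the \EFM-completion procedure of \citet{BeiLiLi21} actually yields \EFXM (rather than just \EFM); this should follow because, when an agent's bundle contains only indivisible goods, the \EFX guarantee from Step~\ref{algextenstart} is exactly the per-good condition required by \EFXM, while agents holding divisible goods are compared under the full envy-freeness criterion, which the \citet{BeiLiLi21} procedure enforces.

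\emph{The welfare bound} is where the real work lies, and I would follow the structure of Lemma~1 of \citet{BarmanBhSh20}. The key object is the maximum-weight matching~$\pi$ matching all agents to goods in~$\widetilde{\indivisibleGoods}$. First I would argue that the welfare of the matching itself, $\sum_{i \in N} \utility{i}{\pi(i)}$, is a constant fraction of the total value $\sum_{i \in N} \utility{i}{\indivisibleGoods \cup D}$. The standard argument splits the total value into the indivisible part and the divisible part. For the indivisible part, since $\pi$ is a maximum-weight matching assigning one good to each agent, a counting/averaging argument (as in \citet{BarmanBhSh20}) shows $\sum_i \utility{i}{\pi(i)}$ recovers at least a $\frac{1}{?}$ fraction of $\sum_i \utility{i}{\indivisibleGoods}$. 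For the divisible part, I would use that allocating each divisible good entirely to the agent who values it most is captured (up to the matching structure) within the same welfare accounting. The factor $\frac{1}{2n+1}$ in the statement suggests that the bound is obtained by charging the divisible goods and the ``leftover'' indivisible goods against the matched value, with the denominator $2n+1$ arising from combining an $n$-type averaging bound on the indivisibles with the extra contribution of the divisible goods.

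\emph{The concrete steps} I would carry out, in order, are: (i) show $\SW(\alloc) \geq \SW(\alloc')= \sum_{i\in N}\utility{i}{\pi(i)}$, i.e., that the extension steps never decrease welfare (each agent retains her matched good, and additional goods only add nonnegative value); (ii) lower bound $\sum_{i \in N}\utility{i}{\pi(i)}$ against $\sum_{i\in N}\utility{i}{\indivisibleGoods}$ using the maximum-weight-matching property, obtaining a bound with denominator on the order of $n$; (iii) separately account for the divisible goods $D$, observing that in Step~\ref{algexten} the \citet{BeiLiLi21} procedure allocates divisible goods and, combined with the guarantee that each agent's value is comparable across agents, contributes an additional term; and (iv) combine (ii) and (iii) to reach the $\frac{1}{2n+1}$ factor.

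\emph{The hard part} will be pinning down the exact constant $2n+1$ and ensuring the divisible-goods accounting in Step~\ref{algexten} is compatible with the matching-based bound on indivisibles, since the two phases allocate disjoint resource types through different subroutines. In particular, I expect the delicate point to be verifying that the \citet{BeiLiLi21} divisible-allocation step distributes the divisible goods in a way that preserves a constant-fraction share of $\sum_i \utility{i}{D}$ while simultaneously maintaining \EFXM; the welfare guarantee for the divisible portion cannot simply be read off from \citet{BarmanBhSh20}, which handles only indivisible goods, so this is the step that genuinely extends their result to the mixed-goods setting. I would reconcile the two by a unified charging argument that bounds the total social welfare loss (relative to the per-agent value $\utility{i}{\indivisibleGoods \cup D}$) incurred across both phases.
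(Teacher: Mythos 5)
Your outline identifies the right ingredients (the matching, the monotonicity of each agent's utility through the extension steps) but the welfare accounting in steps (ii)--(iv) cannot be carried out as you describe, and it misses the mechanism that actually produces the $\frac{1}{2n+1}$ factor. First, the matching welfare $\sum_{i\in N}\utility{i}{\pi(i)}$ is \emph{not} comparable to $\sum_{i\in N}\utility{i}{\indivisibleGoods}$ up to a factor depending only on $n$: with $m\gg n$ goods of equal value the ratio is $1/m$. What the maximum-weight matching actually yields is $\sum_{i}\utility{i}{\pi(i)}\geq\frac1n\sum_{i}\sum_{g\in\widetilde{\indivisibleGoods}^i}\utility{i}{g}$, where $\widetilde{\indivisibleGoods}^i$ is the set of agent~$i$'s $n$ most valuable indivisible goods --- a much weaker statement than your (ii). Second, your (iii) asks the \citet{BeiLiLi21} subroutine to preserve a constant fraction of $\sum_i\utility{i}{D}$; it guarantees no such thing, and no such guarantee is needed. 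The missing idea is that the welfare bound is extracted from the \emph{fairness of the final allocation itself}, not from tracking how much welfare each phase preserves. Since $\alloc$ is \EFXM, for every pair $(i,j)$ there is $S_j\subseteq\widetilde{\indivisibleGoods}_j$ with $|S_j|\leq1$ such that $\utility{i}{\bundle_i}\geq\utility{i}{\bundle_j}-\utility{i}{S_j}$. Summing over $j$ and using that the disjoint singletons $S_1,\dots,S_n$ are dominated by $i$'s top $n$ goods gives
\[
n\cdot\utility{i}{\bundle_i}\;\geq\;\utility{i}{\widetilde{\indivisibleGoods}\cup D\setminus P}-\sum_{g\in\widetilde{\indivisibleGoods}^i}\utility{i}{g},
\]
where $P$ is the pool of unallocated indivisible goods. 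Summing over $i$ and then adding (a) the matching inequality $n\cdot\SW(\alloc)\geq\sum_i\sum_{g\in\widetilde{\indivisibleGoods}^i}\utility{i}{g}$ (valid because each agent's utility never decreases through the extension steps) and (b) $\SW(\alloc)\geq\sum_i\utility{i}{P}$ (from the property that no agent envies the unallocated pool, \citet[Theorem~2.8]{ChaudhuryKaMe21}) gives $(2n+1)\cdot\SW(\alloc)\geq\sum_i\utility{i}{\indivisibleGoods\cup D}$. Your proposal never invokes the \EFXM property of $\alloc$ in the welfare computation and never accounts for $P$ --- which is precisely where the ``$+1$'' in $2n+1$ comes from --- so the stated constant is unreachable along your route.

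Two smaller points: in your step (i), agents do not ``retain their matched good''; the \EFX extension of \citet{ChaudhuryKaMe21} may reassign bundles, and the correct invariant is only that each agent's own utility is non-decreasing, i.e.\ $\utility{i}{\bundle_i}\geq\utility{i}{\bundle'_i}$. Your discussion of why the output is \EFXM (rather than merely \EFM) is in the right spirit, but it needs the fact that the initial allocation $\alloc'$ assigns a single good per agent (hence is trivially \EFXM) and that each subsequent subroutine preserves the relevant guarantee.
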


For the sake of being self-contained, we provide the proof of \Cref{lem:EFM-with-absolute-welfare} but defer it to \Cref{app:proof:EFM-with-absolute-welfare}.
In the following, we give some intuition behind the proof.
At a high level, \citeauthor{BarmanBhSh20}'s algorithm starts from a maximum weight matching where each agent receives exactly one indivisible good, and then performs the envy-cycle elimination procedure of \citet{LiptonMaMo04}.
At the end, an \EFOne allocation $(\bundle_1, \bundle_2, \dots, \bundle_n)$ of indivisible goods is obtained.
In many ``natural scenarios'', we have $\utility{i}{\bundle_i} \geq \frac12 \utility{i}{\bundle_j}$ as removing one indivisible good from $\bundle_j$ eliminates the envy from~$i$ to~$j$.
This gives us the $2n$ approximation ratio to the optimal social welfare.
The inequality $\utility{i}{\bundle_i} \geq \frac12 \utility{i}{\bundle_j}$ can only fail in the case when the removed indivisible good~$g$ from~$\bundle_j$ is ``large'' so that $\utility{i}{\{g\}} > \utility{i}{A_j \setminus \{g\}}$.
However, the initial allocation with the maximum weight matching ensures that the ``large indivisible goods'' are ``reasonably allocated'' so that the inequality holds in some average sense.
\citeauthor{BarmanBhSh20} worked out the calculations to make the approximation guarantee $2n$ hold, and we find out that this set of arguments can be extended to the setting with mixed divisible and indivisible goods.

\begin{proof}[Proof of \Cref{thm:unscaled-n-agents}]
Since \EFM generalizes \EFOne, the desired lower bound of~$\Omega(n)$ follows from Theorem~1 of \citet{BarmanBhSh20}.
Since \EFXM implies \EFM, we obtain the same lower bound of~$\Omega(n)$ for the price of \EFXM.

We now show the asymptotic matching upper bound for the price of \EFXM.
Consider an arbitrary instance.
Since $\sum_{i \in N} \utility{i}{\indivisibleGoods \cup D}$ is a trivial upper bound on the optimal social welfare of the instance, \Cref{lem:EFM-with-absolute-welfare} implies the desired upper bound of~$O(n)$ for the price of \EFXM.

Next, we establish the asymptotic matching upper bound for the price of \EFM. Given that \EFXM implies \EFM, we can readily deduce an upper bound of $O(n)$ for the price of \EFM.
Moreover, if we want to find a complete \EFM allocation, we can adapt \cref{algexten} of \Cref{alg:EFMABS} in the following way:
\begin{itemize}
\item Use Algorithm~1 of \citet{BeiLiLi21} to extend the partial \EFM allocation~$\alloc'$ by allocating both the remaining indivisible goods and divisible goods into a complete \EFM allocation.
\end{itemize}
Following a similar argument, we can conclude that $\SW(\alloc) \geq \frac{1}{2n} \cdot \sum_{i \in N} \utility{i}{\indivisibleGoods \cup D}$.\footnote{To compute the social welfare lower bound for the complete \EFM allocation, we do not need \Cref{eq:SW>=unallocated} in  the proof of \Cref{lem:EFM-with-absolute-welfare}.}
Thus, the price of \EFM is also~$\Theta(n)$.
\end{proof}

\section{Conclusion and Future Work}
In this paper, we have given a complete characterization for the price of envy-freeness in various settings.
The bounds we provide are tight for two agents and asymptotically tight for any number of agents.
In particular, we close a gap left open in~\citep{BeiLuMa21} by showing a tight bound for the price of \EFOne for two agents.
Furthermore, the price of fairness has been studied for the setting with divisible goods and the setting with indivisible goods, but it is much less understood for allocating mixed divisible and indivisible goods.
This paper fills in this missing piece.

For future work, we list two open problems about allocation efficiency which are yet to be understood.

\paragraph{Compatibility between (a Variant of) \EFM and Pareto-Optimality}
As we mentioned in \Cref{sect:relatedwork}, Pareto-optimality is compatible with \EFOne for the setting with indivisible goods and is compatible with \EF{} for the setting with divisible goods.
However, the compatibility of Pareto-optimality with (a variant of) \EFM for the setting with mixed divisible and indivisible goods is still largely unknown.\footnote{We refer the interested readers to the Section~6 of \citet{BeiLiLi21} for more discussions on their preliminary (in)compatibility results.}
Although a simple counterexample is shown in Table~\ref{tab:counterexample}, it is a rather corner case that only happens when an agent has a zero valuation on a divisible good.
What if we assume the valuations are positive?
What if we consider an arguably more natural relaxed version of \EFM where $i$ is allowed to envy $j$ if $j$\rq{}s bundle only contains divisible goods \emph{that has a total value of zero to agent $i$} and the envy can be eliminated by removing one indivisible item from $j$\rq{}s bundle?

\begin{table}[t]
\centering
\caption{An example where \EFM is incompatible with Pareto-optimality. Notice that one of the agents needs to receive both $d_1$ and $d_2$ to guarantee \EFM, but this violates Pareto-optimality. However, this counterexample fails if $0$ in the table is replaced by a very small positive value $\varepsilon>0$.}
\label{tab:counterexample}
\begin{tabular}{@{}l|ccc@{}}
\toprule
& $g_1$ & $d_1$ & $d_2$ \\
\midrule
Agent~$1$'s value & $1$ & $1/2$ & $0$ \\
Agent~$2$'s value & $1$ & $0$ & $1/2$ \\
\bottomrule
\end{tabular}
\end{table}

\paragraph{Resource Monotonicity}
In \Cref{sec:prelim:PoF}, we have mentioned that it is not known whether the resource monotonicity holds for \EFM: we do not know if any partial \EFM allocation can be extended to a complete \EFM allocation with a weakly higher social welfare.
Although we have argued that our definition for the price of \EFM is more natural by including partial \EFM allocations in both cases whether or not resource monotonicity holds, we believe resource monotonicity is an interesting problem by itself.

\section*{Acknowledgments}
Biaoshuai Tao was supported by the National Natural Science Foundation of China (No. 62102252).
Shengxin Liu was partially supported by the National Natural Science Foundation of China (No. 62102117), by the Shenzhen Science and Technology Program (No. RCBS20210609103900003), and by the Guangdong Basic and Applied Basic Research Foundation (No. 2023A1515011188).
Xinhang Lu was partially supported by ARC Laureate Project FL200100204 on ``Trustworthy AI''.

\bibliographystyle{plainnat}
\bibliography{bibliography}

\begin{thebibliography}{48}
\providecommand{\natexlab}[1]{#1}
\providecommand{\url}[1]{\texttt{#1}}
\expandafter\ifx\csname urlstyle\endcsname\relax
  \providecommand{\doi}[1]{doi: #1}\else
  \providecommand{\doi}{doi: \begingroup \urlstyle{rm}\Url}\fi

\bibitem[Akrami et~al.(2023)Akrami, Alon, Chaudhury, Garg, Mehlhorn, and
  Mehta]{AkramiAlCh23}
Hannaneh Akrami, Noga Alon, Bhaskar~Ray Chaudhury, Jugal Garg, Kurt Mehlhorn,
  and Ruta Mehta.
\newblock {EFX}: {A} simpler approach and an (almost) optimal guarantee via
  rainbow cycle number.
\newblock In \emph{Proceedings of the 24th ACM Conference on Economics and
  Computation (EC)}, page~61, 2023.

\bibitem[Alon(1987)]{alon1987splitting}
Noga Alon.
\newblock Splitting necklaces.
\newblock \emph{Advances in Mathematics}, 63\penalty0 (3):\penalty0 247--253,
  1987.

\bibitem[Amanatidis et~al.(2021)Amanatidis, Birmpas, Filos-Ratsikas, Hollender,
  and Voudouris]{AmanatidisBiFi21}
Georgios Amanatidis, Georgios Birmpas, Aris Filos-Ratsikas, Alexandros
  Hollender, and Alexandros~A. Voudouris.
\newblock Maximum {N}ash welfare and other stories about {EFX}.
\newblock \emph{Theoretical Computer Science}, 863:\penalty0 69--85, 2021.

\bibitem[Amanatidis et~al.(2023)Amanatidis, Aziz, Birmpas, Filos-Ratsikas, Li,
  Moulin, Voudouris, and Wu]{AmanatidisAzBi23}
Georgios Amanatidis, Haris Aziz, Georgios Birmpas, Aris Filos-Ratsikas, Bo~Li,
  Herv\'{e} Moulin, Alexandros~A. Voudouris, and Xiaowei Wu.
\newblock Fair division of indivisible goods: Recent progress and open
  questions.
\newblock \emph{Artificial Intelligence}, 322:\penalty0 103965, 2023.

\bibitem[Aumann and Dombb(2015)]{aumann2015efficiency}
Yonatan Aumann and Yair Dombb.
\newblock The efficiency of fair division with connected pieces.
\newblock \emph{ACM Transactions on Economics and Computation}, 3\penalty0
  (4):\penalty0 1--16, 2015.

\bibitem[Aumann et~al.(2013)Aumann, Dombb, and Hassidim]{aumann2012computing}
Yonatan Aumann, Yair Dombb, and Avinatan Hassidim.
\newblock Computing socially-efficient cake divisions.
\newblock In \emph{Proceedings of the 12th International Conference on
  Autonomous Agents and Multi-Agent Systems (AAMAS)}, pages 343--350, 2013.

\bibitem[Aziz et~al.(2023)Aziz, Huang, Mattei, and
  Segal-Halevi]{aziz2020computing}
Haris Aziz, Xin Huang, Nicholas Mattei, and Erel Segal-Halevi.
\newblock Computing welfare-maximizing fair allocations of indivisible goods.
\newblock \emph{European Journal of Operational Research}, 307\penalty0
  (2):\penalty0 773--784, 2023.

\bibitem[Barman et~al.(2018)Barman, Krishnamurthy, and Vaish]{BKV18}
Siddharth Barman, Sanath~Kumar Krishnamurthy, and Rohit Vaish.
\newblock Finding fair and efficient allocations.
\newblock In \emph{Proceedings of the 19th ACM Conference on Economics and
  Computation (EC)}, pages 557--574, 2018.

\bibitem[Barman et~al.(2019)Barman, Ghalme, Jain, Kulkarni, and Narang]{BGJ+19}
Siddharth Barman, Ganesh Ghalme, Shweta Jain, Pooja Kulkarni, and Shivika
  Narang.
\newblock Fair division of indivisible goods among strategic agents.
\newblock In \emph{Proceedings of the 18th International Conference on
  Autonomous Agents and Multi-Agent Systems (AAMAS)}, pages 1811--1813, 2019.

\bibitem[Barman et~al.(2020)Barman, Bhaskar, and Shah]{BarmanBhSh20}
Siddharth Barman, Umang Bhaskar, and Nisarg Shah.
\newblock Optimal bounds on the price of fairness for indivisible goods.
\newblock In \emph{Proceedings of the 16th International Conference on Web and
  Internet Economics (WINE)}, pages 356--369, 2020.

\bibitem[Bartle and Sherbert(2000)]{bartle2000introduction}
Robert~G. Bartle and Donald~R. Sherbert.
\newblock \emph{Introduction to Real Analysis}, volume~2.
\newblock Wiley New York, 2000.

\bibitem[Bei et~al.(2012)Bei, Chen, Hua, Tao, and Yang]{bei2012optimal}
Xiaohui Bei, Ning Chen, Xia Hua, Biaoshuai Tao, and Endong Yang.
\newblock Optimal proportional cake cutting with connected pieces.
\newblock In \emph{Proceedings of the 26th AAAI Conference on Artificial
  Intelligence (AAAI)}, pages 1263--1269, 2012.

\bibitem[Bei et~al.(2017)Bei, Chen, Huzhang, Tao, and Wu]{bei2017cake}
Xiaohui Bei, Ning Chen, Guangda Huzhang, Biaoshuai Tao, and Jiajun Wu.
\newblock Cake cutting: Envy and truth.
\newblock In \emph{Proceedings of the 26th International Joint Conference on
  Artificial Intelligence (IJCAI)}, pages 3625--3631, 2017.

\bibitem[Bei et~al.(2020)Bei, Huzhang, and Suksompong]{bei2020truthful}
Xiaohui Bei, Guangda Huzhang, and Warut Suksompong.
\newblock Truthful fair division without free disposal.
\newblock \emph{Social Choice and Welfare}, 55\penalty0 (3):\penalty0 523--545,
  2020.

\bibitem[Bei et~al.(2021{\natexlab{a}})Bei, Li, Liu, Liu, and Lu]{BeiLiLi21}
Xiaohui Bei, Zihao Li, Jinyan Liu, Shengxin Liu, and Xinhang Lu.
\newblock Fair division of mixed divisible and indivisible goods.
\newblock \emph{Artificial Intelligence}, 293:\penalty0 103436,
  2021{\natexlab{a}}.

\bibitem[Bei et~al.(2021{\natexlab{b}})Bei, Liu, Lu, and Wang]{BeiLiLu21}
Xiaohui Bei, Shengxin Liu, Xinhang Lu, and Hongao Wang.
\newblock Maximin fairness with mixed divisible and indivisible goods.
\newblock \emph{Autonomous Agents and Multi-Agent Systems}, 35\penalty0
  (2):\penalty0 34:1--34:21, 2021{\natexlab{b}}.

\bibitem[Bei et~al.(2021{\natexlab{c}})Bei, Lu, Manurangsi, and
  Suksompong]{BeiLuMa21}
Xiaohui Bei, Xinhang Lu, Pasin Manurangsi, and Warut Suksompong.
\newblock The price of fairness for indivisible goods.
\newblock \emph{Theory of Computing Systems}, 65\penalty0 (7):\penalty0
  1069--1093, 2021{\natexlab{c}}.

\bibitem[Bei et~al.(2023)Bei, Liu, and Lu]{BeiLiLu23}
Xiaohui Bei, Shengxin Liu, and Xinhang Lu.
\newblock Fair division with subjective divisibility.
\newblock In \emph{Proceedings of the 19th Conference on Web and Internet
  Economics (WINE)}, 2023.
\newblock Forthcoming.

\bibitem[Berger et~al.(2022)Berger, Cohen, Feldman, and Fiat]{BergerCoFe22}
Ben Berger, Avi Cohen, Michal Feldman, and Amos Fiat.
\newblock Almost full {EFX} exists for four agents.
\newblock In \emph{Proceedings of the 36th AAAI Conference on Artificial
  Intelligence (AAAI)}, pages 4826--4833, 2022.

\bibitem[Bertsimas et~al.(2011)Bertsimas, Farias, and
  Trichakis]{BertsimasFaTr11}
Dimitris Bertsimas, Vivek~F. Farias, and Nikolaos Trichakis.
\newblock The price of fairness.
\newblock \emph{Operations Research}, 59\penalty0 (1):\penalty0 17--31, 2011.

\bibitem[Bhaskar et~al.(2021)Bhaskar, Sricharan, and Vaish]{BhaskarSrVa21}
Umang Bhaskar, A.~R. Sricharan, and Rohit Vaish.
\newblock On approximate envy-freeness for indivisible chores and mixed
  resources.
\newblock In \emph{Proceedings of the 24th International Conference on
  Approximation Algorithms for Combinatorial Optimization Problems (APPROX)},
  pages 1:1--1:23, 2021.

\bibitem[Brams et~al.(2012)Brams, Feldman, Lai, Morgenstern, and
  Procaccia]{brams2012maxsum}
Steven~J. Brams, Michal Feldman, John Lai, Jamie Morgenstern, and Ariel~D.
  Procaccia.
\newblock On maxsum fair cake divisions.
\newblock In \emph{Proceedings of the 26th AAAI Conference on Artificial
  Intelligence (AAAI)}, pages 1285--1291, 2012.

\bibitem[Bu et~al.(2023{\natexlab{a}})Bu, Li, Liu, Song, and Tao]{BuLiLi22}
Xiaolin Bu, Zihao Li, Shengxin Liu, Jiaxin Song, and Biaoshuai Tao.
\newblock On the complexity of maximizing social welfare within fair
  allocations of indivisible goods.
\newblock \emph{CoRR}, abs/2205.14296, 2023{\natexlab{a}}.

\bibitem[Bu et~al.(2023{\natexlab{b}})Bu, Li, Liu, Song, and Tao]{BuLiLiu23}
Xiaolin Bu, Zihao Li, Shengxin Liu, Jiaxin Song, and Biaoshuai Tao.
\newblock Fair division with allocator's preference.
\newblock In \emph{Proceedings of the 19th Conference on Web and Internet
  Economics (WINE)}, 2023{\natexlab{b}}.
\newblock Forthcoming.

\bibitem[Bu et~al.(2023{\natexlab{c}})Bu, Song, and Tao]{BuSoTa23}
Xiaolin Bu, Jiaxin Song, and Biaoshuai Tao.
\newblock On existence of truthful fair cake cutting mechanisms.
\newblock \emph{Artificial Intelligence}, 319:\penalty0 103904,
  2023{\natexlab{c}}.

\bibitem[Budish(2011)]{Budish11}
Eric Budish.
\newblock The combinatorial assignment problem: Approximate competitive
  equilibrium from equal incomes.
\newblock \emph{Journal of Political Economy}, 119\penalty0 (6):\penalty0
  1061--1103, 2011.

\bibitem[Caragiannis et~al.(2012)Caragiannis, Kaklamanis, Kanellopoulos, and
  Kyropoulou]{CaragiannisKaKa12}
Ioannis Caragiannis, Christos Kaklamanis, Panagiotis Kanellopoulos, and Maria
  Kyropoulou.
\newblock The efficiency of fair division.
\newblock \emph{Theory of Computing Systems}, 50\penalty0 (4):\penalty0
  589--610, 2012.

\bibitem[Caragiannis et~al.(2019{\natexlab{a}})Caragiannis, Gravin, and
  Huang]{CaragiannisGrHu19}
Ioannis Caragiannis, Nick Gravin, and Xin Huang.
\newblock Envy-freeness up to any item with high {N}ash welfare: {T}he virtue
  of donating items.
\newblock In \emph{Proceedings of the 20th ACM Conference on Economics and
  Computation (EC)}, pages 527--545, 2019{\natexlab{a}}.

\bibitem[Caragiannis et~al.(2019{\natexlab{b}})Caragiannis, Kurokawa, Moulin,
  Procaccia, Shah, and Wang]{CaragiannisKuMo19}
Ioannis Caragiannis, David Kurokawa, Herv\'{e} Moulin, Ariel~D. Procaccia,
  Nisarg Shah, and Junxing Wang.
\newblock The unreasonable fairness of maximum {N}ash welfare.
\newblock \emph{ACM Transactions on Economics and Computation}, 7\penalty0
  (3):\penalty0 12:1--12:32, 2019{\natexlab{b}}.

\bibitem[Chaudhury et~al.(2020)Chaudhury, Garg, and Mehlhorn]{ChaudhuryGaMe20}
Bhaskar~Ray Chaudhury, Jugal Garg, and Kurt Mehlhorn.
\newblock {EFX} exists for three agents.
\newblock In \emph{Proceedings of the 21st ACM Conference on Economics and
  Computation (EC)}, pages 1--19, 2020.

\bibitem[Chaudhury et~al.(2021{\natexlab{a}})Chaudhury, Garg, Mehlhorn, Mehta,
  and Misra]{chaudhury2021improving}
Bhaskar~Ray Chaudhury, Jugal Garg, Kurt Mehlhorn, Ruta Mehta, and Pranabendu
  Misra.
\newblock Improving {EFX} guarantees through rainbow cycle number.
\newblock In \emph{Proceedings of the 22nd ACM Conference on Economics and
  Computation (EC)}, pages 310--311, 2021{\natexlab{a}}.

\bibitem[Chaudhury et~al.(2021{\natexlab{b}})Chaudhury, Kavitha, Mehlhorn, and
  Sgouritsa]{ChaudhuryKaMe21}
Bhaskar~Ray Chaudhury, Telikepalli Kavitha, Kurt Mehlhorn, and Alkmini
  Sgouritsa.
\newblock A little charity guarantees almost envy-freeness.
\newblock \emph{SIAM Journal on Computing}, 50\penalty0 (4):\penalty0
  1336--1358, 2021{\natexlab{b}}.

\bibitem[Chen et~al.(2013)Chen, Lai, Parkes, and Procaccia]{chen2013truth}
Yiling Chen, John~K Lai, David~C Parkes, and Ariel~D Procaccia.
\newblock Truth, justice, and cake cutting.
\newblock \emph{Games and Economic Behavior}, 77\penalty0 (1):\penalty0
  284--297, 2013.

\bibitem[Christodoulou et~al.(2023)Christodoulou, Fiat, Koutsoupias, and
  Sgouritsa]{christodoulou2023fair}
George Christodoulou, Amos Fiat, Elias Koutsoupias, and Alkmini Sgouritsa.
\newblock Fair allocation in graphs.
\newblock In \emph{Proceedings of the 24th ACM Conference on Economics and
  Computation (EC)}, pages 473--488, 2023.

\bibitem[Cohler et~al.(2011)Cohler, Lai, Parkes, and
  Procaccia]{cohler2011optimal}
Yuga~J. Cohler, John~K. Lai, David~C. Parkes, and Ariel~D. Procaccia.
\newblock Optimal envy-free cake cutting.
\newblock In \emph{Proceedings of the 25th AAAI Conference on Artificial
  Intelligence (AAAI)}, pages 626--631, 2011.

\bibitem[Foley(1967)]{Foley67}
Duncan~Karl Foley.
\newblock Resource allocation and the public sector.
\newblock \emph{Yale Economics Essays}, 7\penalty0 (1):\penalty0 45--98, 1967.

\bibitem[Garg and Murhekar(2023)]{GargMu23}
Jugal Garg and Aniket Murhekar.
\newblock Computing fair and efficient allocations with few utility values.
\newblock \emph{Theoretical Computer Science}, 962:\penalty0 113932, 2023.

\bibitem[Kawase et~al.(2023)Kawase, Nishimura, and Sumita]{KawaseNiSu23}
Yasushi Kawase, Koichi Nishimura, and Hanna Sumita.
\newblock Fair allocation with binary valuations for mixed divisible and
  indivisible goods.
\newblock \emph{CoRR}, abs/2306.05986, 2023.

\bibitem[Li et~al.(2023)Li, Liu, Lu, and Tao]{LiLiLu23}
Zihao Li, Shengxin Liu, Xinhang Lu, and Biaoshuai Tao.
\newblock Truthful fair mechanisms for allocating mixed divisible and
  indivisible goods.
\newblock In \emph{Proceedings of the 32nd International Joint Conference on
  Artificial Intelligence (IJCAI)}, pages 2808--2816, 2023.

\bibitem[Lipton et~al.(2004)Lipton, Markakis, Mossel, and Saberi]{LiptonMaMo04}
Richard~J. Lipton, Evangelos Markakis, Elchanan Mossel, and Amin Saberi.
\newblock On approximately fair allocations of indivisible goods.
\newblock In \emph{Proceedings of the ACM Conference on Electronic Commerce
  (EC)}, pages 125--131, 2004.

\bibitem[Liu et~al.(2024)Liu, Lu, Suzuki, and Walsh]{LiuLuSu23}
Shengxin Liu, Xinhang Lu, Mashbat Suzuki, and Toby Walsh.
\newblock Mixed fair division: {A} survey.
\newblock In \emph{Proceedings of the AAAI Conference on Artificial
  Intelligence (AAAI)}, 2024.
\newblock Senior Member Presentation Track. Forthcoming.

\bibitem[Menon and Larson(2017)]{menon2017deterministic}
Vijay Menon and Kate Larson.
\newblock Deterministic, strategyproof, and fair cake cutting.
\newblock In \emph{Proceedings of the 26th International Joint Conference on
  Artificial Intelligence (IJCAI)}, pages 352--358, 2017.

\bibitem[Murhekar and Garg(2021)]{garg2021fair}
Aniket Murhekar and Jugal Garg.
\newblock On fair and efficient allocations of indivisible goods.
\newblock In \emph{Proceedings of the 35th AAAI Conference on Artificial
  Intelligence (AAAI)}, pages 5595--5602, 2021.

\bibitem[Nishimura and Sumita(2023)]{NishimuraSu23}
Koichi Nishimura and Hanna Sumita.
\newblock Envy-freeness and maximum {N}ash welfare for mixed divisible and
  indivisible goods.
\newblock \emph{CoRR}, abs/2302.13342v2, 2023.

\bibitem[Plaut and Roughgarden(2020)]{PlautRo20}
Benjamin Plaut and Tim Roughgarden.
\newblock Almost envy-freeness with general valuations.
\newblock \emph{SIAM Journal on Discrete Mathematics}, 34\penalty0
  (2):\penalty0 1039--1068, 2020.

\bibitem[Robertson and Webb(1998)]{RobertsonWe98}
Jack Robertson and William Webb.
\newblock \emph{Cake-Cutting Algorithm: Be Fair If You Can}.
\newblock A K Peters/CRC Press, 1998.

\bibitem[Segal-Halevi and Sziklai(2018)]{segal2018resource}
Erel Segal-Halevi and Bal{\'a}zs~R Sziklai.
\newblock Resource-monotonicity and population-monotonicity in connected
  cake-cutting.
\newblock \emph{Mathematical Social Sciences}, 95:\penalty0 19--30, 2018.

\bibitem[Segal-Halevi and Sziklai(2019)]{SS19}
Erel Segal-Halevi and Bal{\'a}zs~R. Sziklai.
\newblock Monotonicity and competitive equilibrium in cake-cutting.
\newblock \emph{Economic Theory}, 68\penalty0 (2):\penalty0 363--401, 2019.

\end{thebibliography}

\clearpage
\appendix
\section{Other Models for Divisible Goods}
\label{append:generalvaluation}

Other than the way we model the divisible resource as a set of multiple homogeneous divisible goods, another common model is the \emph{cake-cutting} model, which can be viewed as a generalization of our model.
In the cake-cutting model, the divisible resource is modelled as a piece of cake represented by the interval $[0,1]$.
Each agent $i$ has a value density function $f_i:[0,1]\to\mathbb{R}_{\geq0}$ describing the agent\rq{}s preference.
An agent\rq{}s value on a subset $S$ of $[0,1]$ is then given by the integral $\int_Sf_i(x)dx$.

A natural issue in the computer scientists\rq{} perspective is how to succinctly represent value density functions.
There are two different approaches in the past cake-cutting literature.
The first approach defines query models where the algorithm learns the value density functions by communicating with the agents through queries (e.g., the Robertson-Webb model~\citep{RobertsonWe98}).
The second approach assumes the value density functions are piecewise-constant, where, for each $f_i$, the interval $[0,1]$ can be partitioned into finitely many sub-intervals on each of which $f_i$ is a constant.
Piecewise-constant functions can be succinctly encoded, and they can approximate natural general value density functions with arbitrarily good precision.
Moreover, this is a standard assumption when we are studying social welfare.
The allocation maximizing the social welfare may fail to exist for general value density functions even if we assume the functions are continuous.
In the following example with two agents, any allocation with finitely many cuts cannot maximize the social welfare.
$$f_1(x)=\left\{\begin{array}{ll}
1+x\cdot\sin\left(\frac1x\right) & \mbox{if }x\in(0,1]\\
1 & \mbox{if }x=0
\end{array}\right.$$
$$f_2(x)=\left\{\begin{array}{ll}
1-x\cdot\sin\left(\frac1x\right) & \mbox{if }x\in(0,1]\\
1 & \mbox{if }x=0
\end{array}\right.$$

Our model with multiple homogeneous divisible goods is equivalent to the cake-cutting model with piecewise-constant value density functions.
In fact, if we consider the set of all the points of discontinuity for all the piecewise-constant value density functions and consider the partition of $[0,1]$ into multiple sub-intervals yield by these points, each of these sub-intervals can be viewed as a homogeneous divisible good.

\section{Proof of Lemma \ref{lem:EFM-with-absolute-welfare}}
\label{app:proof:EFM-with-absolute-welfare}

 We first add some dummy indivisible goods for which each agent has value zero, if needed, so that there are at least $n$ indivisible goods; denote by $\widetilde{\indivisibleGoods}$ the (possibly extended) set of indivisible goods.
Let~$\widetilde{\indivisibleGoods}^i$ be a set of~$n$ most valuable indivisible goods from~$\widetilde{\indivisibleGoods}$ to each agent~$i \in N$.
Note that those dummy indivisible goods do not affect the social welfare of any allocation.

Consider a subgraph~$G' = (N \cup \widetilde{\indivisibleGoods}, E)$ of the weighted bipartite graph~$G$, where $E = \{(i, g) \colon i \in N, g \in \widetilde{\indivisibleGoods}^i\}$.
In other words, subgraph~$G'$ only considers edges from each agent~$i \in N$ to her $n$ most valuable indivisible goods.
Due to the same argument in the proof of \citet[Lemma~1]{BarmanBhSh20},
we have
\[
\SW(\alloc') = \sum_{i \in N} \utility{i}{\pi(i)} \geq \frac{1}{n} \cdot \sum_{i \in N} \sum_{g \in \widetilde{\indivisibleGoods}^i} \utility{i}{g}.
\]

Since each agent receives a single indivisible good, the (partial) allocation~$\alloc'$ is \EFXM.
According to Lemmas~2.5 and~2.7 of \citet{ChaudhuryKaMe21} and the analysis of Algorithm~$1$ of \citet{BeiLiLi21}, we have $\utility{i}{\bundle_i} \geq \utility{i}{\bundle'_i}$ after executing \crefrange{algextenstart}{algexten}.
As a result,
\[
\SW(\alloc) = \sum_{i \in N} \utility{i}{\bundle_i} \geq \sum_{i \in N} \utility{i}{\bundle'_i} = \SW(\alloc') \geq \frac{1}{n} \cdot \sum_{i \in N} \sum_{g \in \widetilde{\indivisibleGoods}^i} \utility{i}{g},
\]
or, alternatively,
\begin{equation}
\label{eq:SW>=MatchingWelfare}
n \cdot \SW(\alloc) \geq \sum_{i \in N} \sum_{g \in \widetilde{\indivisibleGoods}^i} \utility{i}{g}.
\end{equation}

Together with the property that no one envies the unallocated indivisible goods~$P$, stated in \citet[Theorem~2.8]{ChaudhuryKaMe21}, we have:
\begin{equation}
\label{eq:SW>=unallocated}
\SW(\alloc) = \sum_{i \in N} \utility{i}{\bundle_i} \geq \sum_{i \in N} \utility{i}{\bundle'_i} = \SW(\alloc') \geq \sum_{i\in N} \utility{i}{P}.
\end{equation}

Because allocation~$\alloc$ is \EFXM, for each pair of agents~$i, j \in N$, there exists $S_j \subseteq \widetilde{\indivisibleGoods}_j$ with $|S_j| \leq 1$ such that
\begin{equation}
\utility{i}{\bundle_i} \geq \utility{i}{\widetilde{\indivisibleGoods}_j \setminus S_j, \x_j} = \utility{i}{\bundle_j} - \utility{i}{S_j}.
\end{equation}

Recall that we may have unallocated indivisible goods~$P$.
Summing the above inequality over~$j \in [n]$, we have
\begin{align*}
n \cdot \utility{i}{\bundle_i} &\geq \sum_{j \in [n]} \utility{i}{\bundle_j} - \sum_{j \in [n]} \utility{i}{S_j} \\
&= \utility{i}{\widetilde{\indivisibleGoods} \cup D \setminus P} - \sum_{j \in [n]} \utility{i}{S_j} \\
&\geq \utility{i}{\widetilde{\indivisibleGoods} \cup D \setminus P} - \sum_{g \in \widetilde{\indivisibleGoods}^i} \utility{i}{g},
\end{align*}
where the last inequality holds because the $n$ sets $S_j$'s are disjoint and have cardinality at most~$1$ each, and $\widetilde{\indivisibleGoods}^i$ is the set of the $n$ most valuable goods to agent~$i$.
Next, summing the above inequality over~$i \in N$, we have
\[
n \cdot \sum_{i \in N} \utility{i}{\bundle_i} = n \cdot \SW(\alloc) \geq \sum_{i \in N} \utility{i}{\widetilde{\indivisibleGoods} \cup D \setminus P} - \sum_{i \in N} \sum_{g \in \widetilde{\indivisibleGoods}^i} \utility{i}{g}.
\]
Finally, plugging \Cref{eq:SW>=MatchingWelfare,eq:SW>=unallocated} into the above inequality, we have
\begin{align*}
&n \cdot \SW(\alloc) + n \cdot \SW(\alloc) + \SW(\alloc) \\
\geq &\sum_{i \in N} \utility{i}{\widetilde{\indivisibleGoods} \cup D \setminus P} - \sum_{i \in N} \sum_{g \in \widetilde{\indivisibleGoods}^i} \utility{i}{g} + \sum_{i \in N} \sum_{g \in \widetilde{\indivisibleGoods}^i} \utility{i}{g} + \sum_{i\in N} \utility{i}{P}\\
= &\sum_{i \in N} \utility{i}{\indivisibleGoods \cup D},
\end{align*}
which implies
\[
\SW(\alloc) \geq \frac{1}{2n + 1} \cdot \sum_{i \in N} \utility{i}{\indivisibleGoods \cup D},
\]
as desired.
\end{document}